\documentclass[11pt]{article}
\usepackage[T1]{fontenc}
\usepackage[utf8]{inputenc}

\usepackage{times}
\usepackage{amsmath}
\usepackage{amsthm}
\usepackage{fullpage}
\usepackage{graphicx} 
\usepackage{makecell}

\usepackage{amssymb}
\usepackage{algorithm}
\usepackage[noend]{algpseudocode}
\usepackage{comment}

\usepackage{tabularx}
\usepackage{multirow}
\usepackage{booktabs}
\usepackage{array}
\usepackage{xurl}
\usepackage{hyperref}
\usepackage{mfirstuc}

\MFUhyphentrue
\MFUnocap{a}
\MFUnocap{an}
\MFUnocap{and}
\MFUnocap{as}
\MFUnocap{at}
\MFUnocap{but}
\MFUnocap{by}
\MFUnocap{for}
\MFUnocap{from}
\MFUnocap{in}
\MFUnocap{into}
\MFUnocap{nor}
\MFUnocap{of}
\MFUnocap{on}
\MFUnocap{onto}
\MFUnocap{or}
\MFUnocap{over}
\MFUnocap{per}
\MFUnocap{the}
\MFUnocap{to}
\MFUnocap{via}
\MFUnocap{versus}
\MFUnocap{vs}
\MFUnocap{with}
\MFUnocap{without}

\usepackage{listings}
\usepackage{caption}

\usepackage[most]{tcolorbox}
\usepackage{xparse}

\usepackage[normalem]{ulem}

\usepackage{xcolor}
\definecolor{keywordcolor}{rgb}{0.7, 0.1, 0.1}   
\definecolor{tacticcolor}{rgb}{0.0, 0.1, 0.6}    
\definecolor{commentcolor}{rgb}{0.4, 0.4, 0.4}   
\definecolor{symbolcolor}{rgb}{0.0, 0.1, 0.6}    
\definecolor{sortcolor}{rgb}{0.1, 0.5, 0.1}      
\definecolor{attributecolor}{rgb}{0.7, 0.1, 0.1} 

\NewTColorBox{PromptBlock}{O{} m}{%
  enhanced, breakable,
  title={#2},               
  fonttitle=\bfseries,      
  coltitle=white,           
  colbacktitle=blue!60!black!80,     
  title filled,             
  colback=gray!4,           
  colframe=blue!60!black,   
  boxrule=0.5pt, arc=2pt,   
  left=2mm, right=2mm, top=1.2mm, bottom=1.2mm,
  #1                        
}

\usepackage[commandnameprefix=ifneeded]{changes}
\definechangesauthor[name={review}, color=blue]{RV}
\definechangesauthor[name={review}, color=red]{LP}

\newtheorem{theorem}{Theorem}

\theoremstyle{definition}
\newtheorem{definition}[theorem]{Definition}

\theoremstyle{plain}

\newcolumntype{Y}{>{\raggedright\arraybackslash}X}

\usepackage{tikz}
\usetikzlibrary{arrows.meta,positioning,shapes.geometric}

\title{{\sc FormaTheoria}: Constructing Large-Scale Lean Theories from Mathematical Literature\\ {\Large \it Toward the Formalization of the Classification of Finite Simple Groups} \\ ~\\ }
\author{
Tianjiao Nie\\Qiuzhen College\\Tsinghua University
\and
Ao Zhang\\Qiuzhen College\\Tsinghua University
\and
Yusen Tang\\Qiuzhen College\\Tsinghua University
\and
Damiano Testa\\Department of Mathematics\\University of Warwick
\and
Shing-Tung Yau\thanks{Correspondence to: Shing-Tung Yau \texttt{<styau@tsinghua.edu.cn>}, Peng Li \texttt{<lipeng@air.tsinghua.edu.cn>} and  Yuan Zhou \texttt{<yuan-zhou@tsinghua.edu.cn>}.}\\Yau Mathematical Sciences Center\\Tsinghua University
\and
Peng Li{$^*$}\\Institute for AI Industry Research (AIR)\\ Tsinghua University
\and
Yuan Zhou{$^*$}\\Yau Mathematical Sciences Center\\Tsinghua University
}
\date{}

\begin{document}

\maketitle

\thispagestyle{empty}

\begin{abstract}
Large-scale formalization of advanced mathematics requires more than translating individual statements: it must reconstruct a coherent theory distributed across heterogeneous sources. This process raises four challenges: discovering implicit dependencies, correcting source defects, preserving semantic fidelity, and reconciling cross-source misalignments. We present {\sc FormaTheoria}, an end-to-end, AI-assisted workflow that coordinates source acquisition, formalization, proof construction, recursive dependency discovery, independent review, and reconciliation, while preserving provenance and protecting approved declarations. A shared agent framework supports long-horizon execution through tool use, context compaction, review-gated termination, section-level source context, and dependency-aware batch parallelization. Applying {\sc FormaTheoria} to major components of the Classification of Finite Simple Groups (CFSG), we construct a machine-checked Lean development extending through the Bender--Suzuki theorem and encompassing the Feit--Thompson Odd Order Theorem, Glauberman's $Z^*$ theorem, and the Brauer--Suzuki theorem. This development verifies an extensive body of deeply interdependent finite-group theory while providing a foundation for continuing the CFSG formalization. An empirical analysis of the code and recorded construction process supports the practical relevance of the identified challenges and illustrates the roles of the corresponding workflow components. Together, these results demonstrate how AI-assisted workflows can reconstruct mathematically significant formal theories from distributed literature by combining language-model agents with formal verification, structured review, and explicit dependency management.
\end{abstract}

\newpage
\tableofcontents

\newpage 
\section{Introduction}
\label{sec:introduction}

Formal mathematics represents definitions, statements, and proofs in languages
whose declarations can be checked by a small trusted kernel. It has enabled
machine-checked proofs of major results and the construction of increasingly
extensive libraries of verified mathematics. Proof
assistants such as Rocq (formerly Coq)~\cite{bertot2013interactive}, Lean~\cite{moura2021lean}, Isabelle/HOL~\cite{nipkow2002isabelle}, and HOL Light~\cite{harrison2009hol} have enabled landmark
human-driven formalization projects, while recent language-model and
search-based systems have made substantial progress in statement formalization
and theorem proving. These developments demonstrate the growing capacity of
formal methods to verify sophisticated mathematics, but most existing projects
and AI systems begin with a designated theorem, a curated proof plan, or an
already constructed formal environment.

The Classification of Finite Simple Groups (CFSG) presents a different challenge.
Its proof is distributed across a large and historically layered literature whose
dependencies, definitions, conventions, and representations must be recovered
and aligned during formalization. We introduce {\sc FormaTheoria},
an AI-assisted workflow for reconstructing the required mathematics from this
literature as a coherent, dependency-connected Lean theory. Starting from target results in the literature, the workflow discovers
and formalizes required dependencies, reconciles incompatible mathematical
interfaces, and integrates approved declarations into an evolving formal
development. The literature provides the mathematical basis for this construction but does
not prescribe a literal reproduction of its organization or presentation.
{\sc FormaTheoria} normally preserves the mathematical content of each source
item, while permitting corrections or reconciliations when the
literature contains defects or cross-source misalignments.

\paragraph{Importance and difficulty of formalizing CFSG.}
The Classification of Finite Simple Groups (CFSG) is one of the defining
achievements of modern mathematics. By the Jordan--H\"older theorem, the simple
composition factors of a finite group are determined, up to isomorphism and
reordering, by the group itself. Finite simple groups therefore serve as the
irreducible building blocks of finite-group theory. CFSG identifies all such
groups: cyclic groups of prime order, alternating groups, groups of Lie type,
and the sporadic groups. The classification consequently provides fundamental
structural infrastructure for finite-group theory and supports applications in
number theory, algebraic and arithmetic geometry, combinatorics, and related
areas \cite{solomon2001brief,smith2018applyingcfsg}. A formal development of
CFSG would make this infrastructure available as a mechanically checked and
reusable foundation for subsequent formal mathematics.

The scale and organization of the proof make CFSG an exceptionally difficult
target for formalization. Its first-generation proof emerged over several
decades through contributions from approximately one hundred mathematicians
and remains dispersed across hundreds of papers and books
\cite{solomon2001brief,gorenstein1994classification}. These sources were
written independently, at different times, and for different purposes; they
frequently leave dependencies implicit and may employ different definitions,
conventions, representations, and assumptions. The history of CFSG also
illustrates the difficulty of verifying such an extensive and distributed
argument. After the classification had been announced, the quasithin case
remained a gap for more than two decades and ultimately required the
two-volume work of Aschbacher and Smith~\cite{aschbacher2004quasithin}. Harada and Solomon later repaired
another omitted case involving a standard component of type
$\widehat{M}_{22}$
\cite{harada2008standard}. The continuing
second-generation program seeks to reorganize the proof into a more coherent
account, but this program itself spans many volumes and remains a major
undertaking. Formalizing CFSG therefore requires the reconstruction of a
coherent, dependency-connected theory from a historically layered body of
literature. A machine-checked development would make dependencies and
hypotheses explicit, document cross-source identifications, expose defects or
incompatibilities encountered during construction, and provide stable formal
interfaces through which later results could invoke the classification. It
would thereby turn one of the most extensive collective achievements in
mathematical history into an auditable and reusable formal theory.

\paragraph{Comparison with major formalization projects.}
Major formalization projects have demonstrated that proof assistants can verify
mathematical arguments whose length, computational content, or structural
complexity extends far beyond the scale of an ordinary paper. Human-led projects
have produced machine-checked proofs of the Four Color Theorem, the
Feit--Thompson Odd Order Theorem, and the Kepler conjecture
\cite{gonthier2008formal,gonthier2013machine,hales2017formal}. More recently,
the Liquid Tensor Experiment showed that Lean and Mathlib can support rapid,
collaborative formalization at the frontier of contemporary mathematical research
\cite{commelin2024abstraction}. In parallel, general-purpose libraries such as
Mizar, the Archive of Formal Proofs, and Mathlib have accumulated extensive bodies
of reusable formal mathematics across a broad range of subjects
\cite{mathlib2020,baanen2025growing}.

These developments exhibit different forms of scale. General-purpose libraries
derive their scale from the breadth of the mathematics accumulated across many
areas and contributors. Theorem-centered projects derive theirs from the depth and
complexity of formalizing a designated result, typically using a curated blueprint
or a small number of systematic expositions. Our development presents a third
setting: constructing a dependency-connected theory whose constituent results and
mathematical interfaces must be recovered from a large and heterogeneous
literature. Table~\ref{tab:intro-scale-comparison} compares representative projects
along three dimensions: approximate code volume, mathematical scope, and the
organization of their source material. Because proof assistants, repository
boundaries, and counting conventions differ, the reported figures provide
order-of-magnitude context rather than normalized measurements.

\begin{table}[t]
\begingroup
\centering
\small
\begin{tabular}{p{0.15\linewidth}p{0.19\linewidth}p{0.26\linewidth}p{0.29\linewidth}}
\toprule
Development & Formal system and scope & Reported scale: code volume and effort & Organization of the mathematical source \\
\midrule
Mathlib \cite{baanen2025growing} & Lean; broad community library & 1.9 million lines (2025); ongoing since 2017 & General-purpose library spanning many mathematical subjects and contributors \\ \addlinespace[5pt]
Odd Order \cite{gonthier2013machine} & Rocq; Feit--Thompson theorem & $\sim$150,000 lines with libraries ($\sim$40,000 for the theorem itself); 6 years & One theorem reconstructed principally from two systematic expositions \\ \addlinespace[5pt]
Flyspeck \cite{hales2017formal,hales2024formal} & HOL Light \& Isabelle; Kepler conjecture & $\sim$500,000 lines; eleven years (2003--2014); $\sim$20 work-years & One proof, including its computational components, organized around a project-specific revised blueprint \\ \addlinespace[5pt]
Liquid Tensor Experiment \cite{commelin2024abstraction} & Lean; one research theorem & $\sim$90,000 lines \cite{buzzard2022beyond}; $\sim$18 months (2020--2022) & One designated theorem developed through a shared blueprint and specification-driven collaboration, by about a dozen mathematicians and several computer scientists \\ \addlinespace[5pt]
{\sc FormaTheoria} (this work) & Lean; several CFSG components forming one dependency-connected theory & More than 994,000 lines in more than 850 files, and 30,298 declarations reachable from Bender--Suzuki\textsuperscript{\ref{fn:snapshot}}; $\sim$7 months & A heterogeneous literature whose sources, dependencies, definitions, and conventions must be discovered and aligned during formalization \\
\bottomrule
\end{tabular}
\caption{A comparison of the scope and scale of representative formal-mathematics developments. Each row reports the code volume and the effort as the project itself reports them; the figures are not normalized across formal systems, and line counts in particular are not comparable between them.}
\label{tab:intro-scale-comparison}
\endgroup
\end{table}

By raw code volume, our development comprises more than 994,000
lines.\protect\footnote{\label{fn:snapshot}Unless otherwise stated, all
quantitative figures reported for our development refer to a snapshot taken when
the proofs were first completed, before subsequent cleanup and additions;
Section~\ref{sec:analysis} analyzes this snapshot.} As a rough indication of
codebase size, this exceeds half of the 1.9 million lines reported for Mathlib in
2025. The more consequential distinction concerns the organization and scope of the
formalized mathematics. Unlike a general-purpose library, our development is
directed toward a connected sequence of structural results in finite-group theory.
Unlike a theorem-centered project, it does not begin with a single curated
formalization blueprint. It instead integrates results recovered from independently
written books and papers into a coherent Lean theory while preserving consistency
across their formal interfaces. The project’s scale therefore reflects both the
volume of its formal development and the breadth and depth of the
dependency-connected theory that it reconstructs. Section~\ref{sec:analysis}
examines this latter dimension more closely through the declaration-dependency
graph and a section-by-section comparison with the Rocq formalization of the Odd
Order Theorem.

\paragraph{Comparison with AI-assisted proof construction.}
Recent AI-assisted theorem-proving systems have substantially expanded the
range of formal proof obligations that can be addressed automatically.
Pretraining on mathematical text, source code, and formal proofs supports the
interpretation of informal mathematics and the generation of proof-assistant
syntax \cite{azerbayev2024llemma}. Reasoning-oriented post-training,
inference-time search, and proof-assistant feedback further support the
decomposition, verification, and iterative repair of multistep arguments
\cite{xin2024deepseekprover,xin2025deepseekprover,
hubert2025alphaproof}. These techniques provide complementary capabilities for
translating mathematical content and constructing mechanically checked proofs.
Benchmarks and theorem-proving systems demonstrate substantial progress on
individual formalization and proof tasks. miniF2F comprises 488 independently
posed competition-level statements, while LeanDojo extracts 98,734
theorem--proof pairs from Mathlib
\cite{zheng2022minif2f,yang2023leandojo}. DeepSeek-Prover, AlphaProof, and
Aristotle employ different combinations of synthetic data, learned reasoning,
verifier feedback, and inference-time proof search
\cite{xin2024deepseekprover,xin2025deepseekprover,
hubert2025alphaproof,achim2025aristotle}. Autoformalization benchmarks extend
the task to translating natural-language statements, but typically treat these
statements as separate inputs \cite{yu2025formalmath}. miniCTX incorporates
repository context spanning tens of thousands of tokens, thereby accounting
for definitions, lemmas, notation, and file-level information beyond the
immediate proof state \cite{hu2025minictx}.

Despite their methodological differences, these settings generally assume that
the target theorem and its surrounding formal environment have already been
specified. In conventional theorem proving, a formal context $\Gamma$ and
target type $\tau$ are given, and the task is to construct a proof term $p$ such
that $\Gamma \vdash p : \tau$. Statement autoformalization additionally
constructs $\tau$ from a supplied natural-language statement, but usually does
so relative to an existing formal context. The principal challenge is therefore
to translate or prove an individual target using definitions, notation, and
supporting results that are already available.

{\sc FormaTheoria} addresses a different task: constructing and extending the
formal context required to state and prove the target results. Starting from a
source item and an existing approved Lean context, the workflow must determine
which additional definitions and theorems are required, locate and interpret
their sources, formalize newly discovered dependencies, reconcile incompatible
mathematical interfaces, and integrate the resulting declarations without
altering approved mathematical content. These activities are interdependent:
proof attempts may expose missing dependencies, source inspection may reveal
defects or misalignments, and reconciliation may create new proof obligations.
Because this process can span many agent interactions and exceed a single model
context, {\sc FormaTheoria} maintains persistent structured project state,
retrieves task-relevant information, and compacts earlier interactions while
preserving essential mathematical decisions.

The distinction therefore lies in the unit of construction and evaluation.
Existing AI-assisted proving systems generally construct an individual statement
or proof relative to a specified target and formal project. {\sc FormaTheoria}
instead produces the elaborated, dependency-resolved, and internally coherent
Lean theory required to state and establish the target results without unsupported
axioms. Semantic review therefore evaluates whether each formalization is mathematically
supported by its sources and whether any departures are justified, rather than
requiring literal fidelity to the source presentation.

\subsection{Challenges of Large-Scale Source-Based Formalization}
\label{sec:challenges}

Large-scale source-based formalization must recover the required mathematical
statements and dependencies as formalization proceeds. Proof construction must
therefore be coordinated with source acquisition, dependency discovery,
cross-source alignment, semantic translation, and source validation. This
setting gives rise to four interdependent challenges, denoted
\textsc{C1}--\textsc{C4}.

\paragraph{\textsc{C1}: Distributed sources and deep dependencies.}
The source corpus comprises numerous books and papers rather than a single
self-contained exposition. Dependencies frequently cross source boundaries, and a
cited result may itself depend on a substantial chain of earlier results. Moreover,
an individual proof in the literature may extend over many pages and invoke
intermediate results whose relevance becomes apparent only during formalization.
Formalizing a target theorem therefore requires the system to discover and
reconstruct a large, evolving dependency context. An omitted or incorrectly
identified dependency may prevent subsequent formalization even when the local
proof argument is otherwise valid.

\paragraph{\textsc{C2}: Misalignments across mathematical sources.}
Because the source corpus is distributed across independently written books and
papers, its constituent sources need not adopt a common system of definitions,
conventions, notation, or representations. Two sources may define the same object
under different conventions, introduce extensionally equivalent formulations
through distinct mathematical constructions, or use the same name for different
objects. A downstream source may also move between such formulations without
explicitly specifying the required identification. Consequently, formalizations
that are individually faithful to their respective sources may not be directly
composable in Lean. Addressing these cross-source misalignments may require bridging equivalences, conversion constructions, or permitted revisions to candidate formalizations.

\paragraph{\textsc{C3}: Semantically incorrect Lean translations.}
Natural-language definitions and theorem statements must be interpreted in their documented mathematical context, since standing assumptions and conventions may be left implicit. A Lean declaration may nevertheless elaborate successfully while misrepresenting the source, for example, by omitting or altering an assumption, mistranslating a quantifier, encoding a different mathematical object, or strengthening or weakening a conclusion. Lean's elaborator establishes well-typedness, not semantic fidelity. Such errors may therefore remain undetected by Lean and propagate to subsequent formalizations. Semantic fidelity must accordingly be assessed independently of elaboration and provability, against both the source item and its surrounding context.

\paragraph{\textsc{C4}: Defects in the mathematical sources.}
The mathematical sources may themselves contain typographical errors, missing
conditions, ambiguous formulations, incomplete arguments, or incorrect statements.
Such defects may initially be difficult to distinguish from errors introduced
during source transcription, dependency reconstruction, translation, or proof
construction. In particular, the failure to prove a faithfully translated statement
may reflect a defect in the source rather than a limitation of the proof procedure.
The formalization workflow must therefore retain sufficient source context to
diagnose the discrepancy and must report cases that cannot be resolved without
mathematical judgment for human investigation.

\medskip

These challenges arise at distinct interfaces but interact closely. A
missing dependency may make a correct translation appear unprovable, a cross-source
misalignment may resemble a translation error, and either a semantic translation
error or a source defect may become apparent only during downstream proof
construction. The workflow introduced in Section~\ref{sec:workflow-overview}
therefore addresses these challenges through distinct but coordinated mechanisms.

\subsection{Our Contributions}
\label{sec:introduction-contributions}
To address these challenges, we introduce {\sc FormaTheoria}. The principal
contributions of this paper are as follows.

\begin{itemize}
\item We develop an end-to-end workflow that coordinates source acquisition, formalization, proof construction, dependency discovery, review, and reconciliation at scale. The workflow preserves provenance, protects approved declarations in a persistent library, and records unresolved issues for human investigation. It is implemented through a shared agent framework designed for long-horizon formalization, combining multi-turn language-model interaction with Lean and other tools, context compaction, review-gated termination, section-level source context, and dependency-aware batch parallelization. Sections~\ref{sec:workflow-overview},~\ref{sec:component-details} and~\ref{sec:implementation-considerations} describe the overall workflow, its constituent components, and the associated implementation considerations, respectively.

\item We construct a large-scale Lean formalization of major components of the CFSG program, extending through the Bender--Suzuki theorem and encompassing the Feit--Thompson Odd Order Theorem, Glauberman's $Z^*$ theorem, and the Brauer--Suzuki theorem. The resulting development verifies an extensive body of deeply interdependent finite-group theory and establishes a foundation for continuing the formalization of the CFSG. The Lean code for this development is available at \texttt{https://github.com/Qiuzhen-CFSG/CFSG}.

\item We conduct an empirical analysis of the resulting Lean code and its recorded construction process. The analysis examines the expansion and structure of source dependencies, the distribution of source defects and cross-source reconciliations, semantic-review outcomes, and the development's scale and construction time in comparison with the human-developed Rocq formalization. These observations support the practical relevance of the challenges identified above and illustrate how the corresponding workflow components address them.
Section~\ref{sec:introduction-observations-implications} summarizes the principal observations and design implications, while Section~\ref{sec:analysis} presents the detailed analysis.

\item We experimentally evaluate three core workflow mechanisms through controlled ablation studies. On the evaluated targets, dependency-aware parallelism reduced proof-construction wall time by 76.1\% ($4.2\times$), whereas section-level context sharing reduced wall time by 26.5\% and total token consumption by 18.0\%. A pilot study of ten source items further found 6/10 correct translations without review, 9/10 with diagnostic review feedback, and 10/10 with a self-improved review rubric. These results quantify the tradeoff between elapsed time and inference cost and provide evidence that context reuse and source-grounded review feedback contribute to construction efficiency and translation fidelity. Section~\ref{sec:ablation} presents these studies.

\end{itemize}

\subsection{Empirical Observations and Design Implications}
\label{sec:introduction-observations-implications}

At the completion-time snapshot analyzed in Section~\ref{sec:analysis}, our workflow had completed Lean formalizations of the Feit--Thompson Odd Order Theorem, Glauberman's $Z^*$ theorem, the Brauer--Suzuki theorem, and the Bender--Suzuki theorem (Appendix~\ref{app:theorems}). These results form a connected theoretical progression rather than four interchangeable targets: Odd Order excludes nonabelian finite simple groups of odd order; Glauberman's $Z^*$ and Brauer--Suzuki convert information about involutions and Sylow $2$-subgroups into global restrictions; and Bender--Suzuki culminates in the classification of finite simple groups with strongly embedded subgroups. The Odd Order development was independently reconstructed principally from Bender--Glauberman and Peterfalvi \cite{bender1994local,peterfalvi2000character}, rather than translated or ported from the existing Rocq formalization \cite{gonthier2013machine}; a shared source decomposition did not determine the same proof route or declaration structure. Extending this chain through Bender--Suzuki required a broader and more heterogeneous literature. The resulting repository contained more than 994,000 lines of Lean in more than 850 files.

Taken together, the resulting formalization and its recorded construction process yield four empirical observations and associated design implications for large-scale formalization. The first three support the practical relevance of Challenges~\textsc{C1}--\textsc{C4} and illustrate the roles of dependency discovery, source correction and reconciliation, and layered semantic review in addressing them. The fourth concerns the quality of the resulting formal artifact.

\paragraph{The scope and difficulty of a large formalization emerge through dependency discovery.}
The results support the practical relevance of Challenge~\textsc{C1}: the required source corpus and mathematical infrastructure cannot necessarily be specified in advance. The three human-supplied sources served only as entry points, while recursive dependency discovery added twelve further sources, accounting for 680 consulted pages, or 65.6\% of the total. Moreover, the resulting dependency graph combines a broad, shallow base with a narrow but exceptionally deep backbone. In the Bender--Suzuki project-only closure, 29.4\% of declarations have depth zero and the median depth is two, whereas the root has depth 458. These findings indicate that source page count alone is a poor measure of formalization difficulty; the discovered dependency structure also determines the required infrastructure, the critical proof paths, and the opportunities for parallel construction. They therefore motivate the workflow's recursive dependency-discovery and dependency-management components (Section~\ref{sec:analysis-c1}).

\paragraph{Source defects and cross-source misalignments require distinct but coordinated responses.}
The development provides evidence for Challenges~\textsc{C2} and~\textsc{C4}. It exposed hidden assumptions, missing hypotheses, an incorrect divisibility condition, and an indexing error, while also encountering incompatible definitions and representations across sources. These two classes of problems require different responses: source defects must be corrected or explicitly justified against the relevant mathematical context, whereas cross-source misalignments require conversion lemmas, compatibility bridges, or reconciled interfaces. The observed modifications were concentrated at representation-heavy interfaces rather than distributed uniformly by source size or declaration count, suggesting that a relatively small number of carefully designed reconciliations can stabilize many downstream declarations. When no source-supported correction or bridge is available, however, the issue must be preserved and escalated for mathematical judgment. These observations motivate the workflow's separation of source correction, cross-source reconciliation, and human intervention (Sections~\ref{sec:issues} and~\ref{sec:analysis-reconciliation})

\paragraph{Semantic fidelity requires review beyond kernel verification.}
The review outcomes illustrate Challenge~\textsc{C3}: Lean elaboration establishes that a term has its stated type, but does not establish that the type represents the source result in its documented context. Eleven of the fourteen PF Part~I sections were returned after their first review, showing that mechanically valid formalizations may still require semantic revision. The workflow therefore combines elaboration and axiom checks with independent review of translated declarations and root theorem statements. This review must also distinguish mistranslation from missing dependencies, source defects, and cross-source misalignments, since these failure modes can produce similar symptoms during proof construction. Automation can perform much of the checking and revision, while human judgment remains necessary when the intended interpretation is ambiguous or a repair lacks clear source support. The evidence thus illustrates the value of layered review that concentrates human attention at unresolved semantic boundaries (Sections~\ref{sec:examples-translation-mistakes} and \ref{sec:analysis-review-reconcile}).

\paragraph{Construction efficiency and artifact quality are distinct objectives.}
In the comparison considered here, the automatically constructed Lean development was completed over a shorter wall-clock period but contains substantially more code than the human-developed Rocq formalization. These figures are not normalized measures of effort, readability, or quality, but they indicate that rapid construction does not by itself produce a concise and reusable library. Large-scale autoformalization should therefore be evaluated not only by the amount of verified mathematics produced, but also by the stability of its abstractions, the transparency of its proofs, and its potential for integration into reusable mathematical libraries (Section~\ref{sec:analysis-rocq}). This observation also motivates the future direction, discussed in Section~\ref{sec:conclusion}, of improving the quality and reusability of AI-generated formalization code.

\section{{\sc FormaTheoria}: Formalization Model and Workflow}
\label{sec:workflow-overview}

This section presents the formalization model and high-level workflow through
which {\sc FormaTheoria} constructs approved Lean formalizations of
natural-language source items.  Figure~\ref{fig:formatheoria-workflow} provides an overview of
{\sc FormaTheoria}, showing how its principal workflow components address the
four challenges and how the underlying agents are executed through a shared
agent framework. The remainder of this section describes the workflow in greater detail. Section~\ref{sec:source-item-formalization} defines source items and their formalizations, together with the approval rules that protect completed mathematical content. Section~\ref{sec:source-defects-and-misalignments} classifies the principal
issues arising from Challenges~\textsc{C2}--\textsc{C4} that we encountered
during autoformalization. Finally, Section~\ref{sec:theory-constructor-workflow-overview} explains how
{\sc FormaTheoria} coordinates source retrieval and transcription, semantic
translation, graph-based proof construction, and on-demand dependency discovery;
how the {\sc Prover} procedure and {\sc ReconcilerAgent} address source defects
and cross-source misalignments, respectively; and how unresolved ambiguities,
errors, and conflicts are escalated for human investigation.

\begin{figure}
    \centering
    \includegraphics[width=\linewidth]{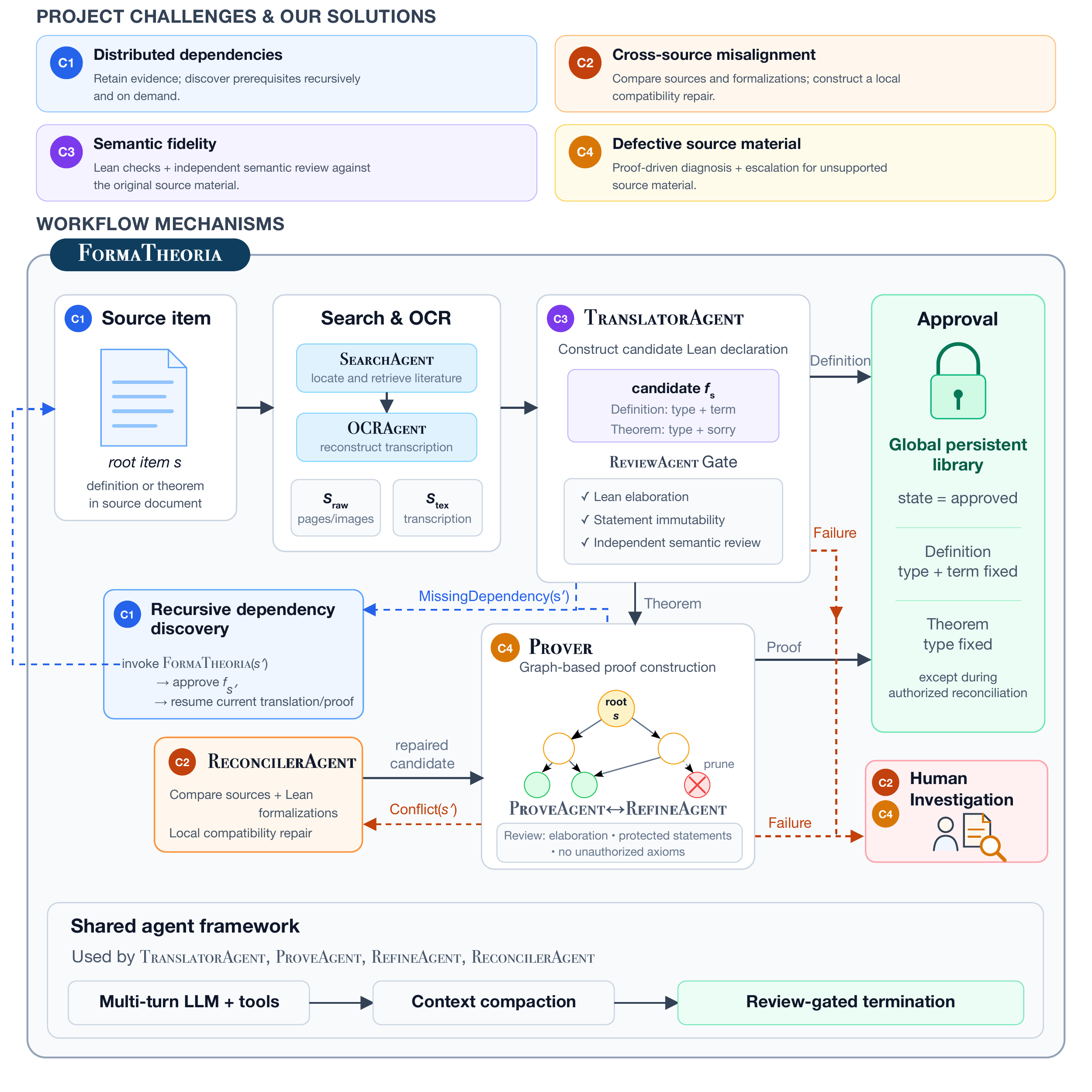}
    \caption{Overview of {\sc FormaTheoria}. Starting from a root source item, the
system retrieves and transcribes the relevant source material, constructs a
candidate Lean declaration, and, for a theorem, performs graph-based proof
construction before independent review and approval. Newly discovered
dependencies are formalized recursively, cross-source conflicts are handled by
{\sc ReconcilerAgent}, and unresolved dependencies, source defects, and
incompatibilities are escalated for human investigation. Approved formalizations are committed to the global persistent
library under the stated immutability rules. The upper panel maps the four principal challenges to their corresponding workflow components, while the lower panel summarizes the shared agent framework through which the agents underlying these components are executed.}
    \label{fig:formatheoria-workflow}
\end{figure}
\subsection{Source Items and Formalizations}
\label{sec:source-item-formalization}

A \emph{source item} is a pointer to a mathematical definition or theorem stated in natural language in mathematical literature supplied by the user or discovered automatically. Here and throughout the paper, the term \emph{theorem} also encompasses results labeled as lemmas, propositions, or corollaries in the source. A source item identifies both the source document and the relevant result, for example, by the document title and theorem number. It thereby specifies what is to be formalized without reproducing the source text itself.

For a source item \(s\), a \emph{formalization} is a tuple
\[
f_s
=
(\mathsf{type}_s,\mathsf{term}_s,\Gamma_s,\mathsf{state}_s),
\]
where \(\Gamma_s\) is a Lean context, \(\mathsf{type}_s\) is a Lean type, and \(\mathsf{term}_s\) is a Lean term satisfying
\[
\Gamma_s\vdash\mathsf{term}_s:\mathsf{type}_s.
\]
If \(s\) is a definition, then \(\mathsf{term}_s\) is the body of the definition, while \(\mathsf{type}_s\) specifies the kind of mathematical object being defined. If \(s\) is a theorem, then \(\mathsf{type}_s\) expresses its assumptions and conclusion, and \(\mathsf{term}_s\) is a proof of that statement under the Curry--Howard correspondence.

The state of a formalization is
\[
\mathsf{state}_s\in
\{\mathsf{candidate},\mathsf{approved}\}.
\]
A candidate formalization is still being constructed or repaired by the
\textsc{TranslatorAgent}, \textsc{ProveAgent}, or \textsc{ReconcilerAgent}.
An approved formalization is regarded as complete. For a definition, approval
means that its Lean body faithfully represents the mathematical definition in
the source. For a theorem, approval means that its Lean type has the same
mathematical meaning as the theorem statement and that its Lean term proves
this type. Note that the mathematical content of a source item is interpreted in its documented
context, including any standing hypotheses, local conventions, or preceding
definitions needed to understand it. Accordingly, assumptions omitted from the
displayed statement may be made explicit in Lean when they are unambiguously
supplied by the surrounding text; doing so does not constitute a departure from
the source.

\paragraph{Global library and the approval rule.} All source items and their formalizations are maintained in a global persistent
library. Once a formalization \(f_s\) is approved, both the type and term of a definition are fixed. For a theorem, the type is fixed, although its proof term may be replaced provided that the resulting code still elaborates. The only exception is that the {\sc ReconcilerAgent}, as introduced later, may perform an authorized revision of an approved formalization when necessary to resolve incompatibilities among source-item formalizations.

\subsection{Classification of the Issues in Autoformalization Process}
\label{sec:source-defects-and-misalignments}

In this section, we classify the principal issues encountered during the
autoformalization process. These issues arise from
Challenges~\textsc{C2}--\textsc{C4}, and {\sc FormaTheoria} incorporates
corresponding mechanisms that attempt to address them automatically. This
classification may not be exhaustive. Issues that the workflow cannot resolve
automatically are escalated for human investigation.

\paragraph{Source defects.}
As noted in Challenge~\textsc{C4}, natural-language mathematical literature may
contain errors that compromise the correctness of its mathematical content. We
distinguish the following two types of source defect:
\begin{enumerate}
  \item[\textsc{[Def1]}] A source item \(s\) contains a typographical or
  mathematical error in a definition or theorem statement, resulting in an
  incorrect definition, hypothesis, or conclusion.

  \item[\textsc{[Def2]}] A source item \(s\) contains a typographical or
  mathematical error in the proof of a theorem, rendering one or more steps of
  the proof invalid.
\end{enumerate}

\paragraph{Source misalignments.}
As noted in Challenge~\textsc{C2}, natural-language mathematical sources may
leave conventions implicit, use the same terminology for different objects, or
present equivalent concepts in formulations that are not directly
interchangeable in Lean. We distinguish the following two types of source
misalignment:
\begin{enumerate}
  \item[\textsc{[Mal1]}] Source items \(s_1\) and \(s_2\) give mathematically
  equivalent definitions of the same object, but the corresponding
  formalizations \(f_1\) and \(f_2\) are not definitionally equal in Lean.

  \item[\textsc{[Mal2]}] Source items \(s_1\) and \(s_2\) use the same name for
  distinct mathematical objects, while a downstream source item \(s_3\)
  implicitly treats these objects as identical.
\end{enumerate}

\paragraph{Incorrect Lean translations.}
This class of issues corresponds to Challenge~\textsc{C3} and concerns the
semantic fidelity of {\sc TranslatorAgent} when translating source definitions
and theorem statements into Lean. An incorrectly translated theorem statement
may render the intended theorem unprovable. An incorrectly translated
definition or theorem statement may also prevent a formalized result from being
applied to a downstream source item. We distinguish the following two types:
\begin{enumerate}
  \item[\textsc{[InT1]}] A theorem source item \(s\) is translated into \(f\), but
  the statement in \(f\) cannot be proved in Lean because the translation is
  incorrect.

  \item[\textsc{[InT2]}] A source item \(s_2\) invokes \(s_1\), but \(f_1\) cannot
  be applied directly in \(f_2\) because \(f_1\) incorrectly translates \(s_1\).
\end{enumerate}

Section~\ref{sec:issues} presents representative source defects and misalignments identified in the literature during our project. Section\ref{sec:examples-translation-mistakes} presents representative examples of incorrect Lean translations produced by {\sc TranslatorAgent}.

\paragraph{Addressing the issues identified above.}
We adopt different strategies for the types of issues identified above. A source
defect of type~\textsc{[Def2]} is typically detected and corrected during the
{\sc Prover} procedure introduced in
Section~\ref{sec:theory-constructor-workflow-overview}. If {\sc Prover} cannot
resolve the defect while preserving the formalized statement, it escalates the
issue for human investigation. Section~\ref{sec:prover} describes this procedure
in detail. Some defects of type~\textsc{[Def1]} can also be detected and
corrected through this procedure. The remaining cases manifest in Lean as issues
of type~\textsc{[InT1]} or~\textsc{[InT2]} and are handled using the
corresponding procedures.

A misalignment of type~\textsc{[Mal1]} can generally be addressed by introducing
a helper lemma relating \(f_1\) and \(f_2\), for example, by establishing the
equivalence of the two formalized definitions. A misalignment of
type~\textsc{[Mal2]} manifests operationally as an issue of type~\textsc{[InT2]}
when a downstream formalization attempts to use \(f_1\) in a context where
\(f_2\) is expected. Such cases are therefore handled in the same manner as
issues of type~\textsc{[InT2]}.
To address issues of types~\textsc{[InT1]} and~\textsc{[InT2]}, we introduce
{\sc ReconcilerAgent}, which diagnoses their underlying causes and constructs
local, source-based repairs. Section~\ref{sec:theory-constructor-workflow-overview}
situates this agent within the overall workflow, while
Section~\ref{sec:reconciler-agent} describes its operation in detail.

\subsection{Workflow Overview}
\label{sec:theory-constructor-workflow-overview}

{\sc FormaTheoria} constructs an approved Lean formalization of a root source
item together with the relevant source items on which it depends. This process
extends beyond the translation of an isolated passage.
As shown in Figure~\ref{fig:formatheoria-workflow}, the system retrieves and transcribes the relevant source material, constructs
the Lean statement or definition, proves the statement when the source item is
a theorem, and checks the resulting artifacts. During translation and proof
construction, the system discovers additional source items that are required
to state or prove the current item. It recursively formalizes these dependencies
and then resumes the interrupted task in the enlarged Lean context.

Dependencies are therefore discovered on demand. The system does not require a
complete dependency structure to be supplied in advance. A dependency may be
identified while translating the mathematical statement or while constructing
its proof. In either case, the \textsc{FormaTheoria} suspends the current
task, invokes itself on the missing source item, and restarts or resumes the
current task after the dependency has been approved.

Previously formalized material may rely on definitions, assumptions, notation,
or representations that are incompatible with the source item currently under
consideration. Such incompatibilities may arise from source defects, source
misalignments, or translation errors and manifest as issues of
type~\textsc{[InT2]}, as defined in
Section~\ref{sec:source-defects-and-misalignments}. When such an incompatibility
is detected, {\sc ReconcilerAgent} compares the relevant sources and Lean
formalizations and attempts to construct a repair.
{\sc ReconcilerAgent} may also address issues of type~\textsc{[InT1]}. In such
cases, {\sc Prover} may determine that the current theorem \(s\) is unprovable
as translated and identify a potential conflict involving a previously
formalized source item \(s'\) that invokes \(s\). {\sc ReconcilerAgent} then
compares the statement of \(s\) with the context in which \(s'\) invokes it and
seeks an appropriate repair. In either case, the repair may revise the current candidate, introduce a justified compatibility construction, replace the proof term of an approved theorem without changing its statement, or revise explicitly authorized theorem statements. The system does not make unauthorized changes to the protected mathematical content of an approved formalization merely to achieve successful Lean elaboration.

The workflow escalates a task when the agents cannot produce a supported
translation, proof, or reconciliation. Human investigation may then examine the
sources and Lean code, edit the formalization, provide additional instructions
to an agent, or invoke the \textsc{FormaTheoria} on another source item.
Thus, human intervention is reserved for cases in which the automated workflow
cannot determine a source-based continuation.

Algorithm~\ref{alg:constructor} summarizes this recursive process. The return
value \(r\) indicates whether an agent has completed its task, found a missing
dependency, detected a conflict, or encountered a failure. The value \(f\)
denotes the current formalization.

\begin{algorithm}[t]
\caption{{\sc FormaTheoria}}
\label{alg:constructor}
\small
\begin{algorithmic}[1]
  \Require source item \(s\)
  \Ensure formalization \(f_s\) of \(s\)

  \State \(S_{\mathrm{raw}}\gets\Call{SearchAgent}{s}\)
  \State \(S_{\mathrm{tex}}\gets\Call{OCRAgent}{S_{\mathrm{raw}}}\)

  \State \((r,f)\gets
  \Call{TranslatorAgent}{s,S_{\mathrm{raw}},S_{\mathrm{tex}}}\)
  \State $f.\mathtt{state} \gets \mathtt{candidate}$

  \If{\(r=\mathsf{MissingDependency}(s')\)}
    \State \(\Call{FormaTheoria}{s'}\) \Comment{Workflow halts and human intervenes if formalization of $s'$ fails}
    \State \Return \(\Call{FormaTheoria}{s}\)
    \Comment{restart translation}
  \ElsIf{\(r=\mathsf{Failure}(e)\)}
    \State \Return \(\Call{ReportToHuman}{e}\)
  \EndIf

  \If{\(s\) is a definition}
    \State $f.\mathtt{state} \gets \mathtt{approved}$
    \State \Return \(f\)
  \EndIf

  \While{\(\mathsf{true}\)} \Comment{additional procedure when $s$ is a theorem}
    \State \((r,f)\gets
    \Call{Prover}{s,S_{\mathrm{raw}},S_{\mathrm{tex}},f}\)

    \If{\(r=\mathsf{MissingDependency}(s')\)}
      \State \(\Call{FormaTheoria}{s'}\) \Comment{Workflow halts and human intervenes if formalization of $s'$ fails}
      \State \textbf{continue}
      \Comment{restart proof after completing the dependency}

    \ElsIf{\(r=\mathsf{Conflict}(s')\)}
      \State Retrieve $S'_{\mathrm{raw}}$, and $S'_{\mathrm{tex}}$ for $s'$ via {\sc SearchAgent} and {\sc OCRAgent}
      \State Let $f_{s'}$ denote the existing formalization for $s'$
      \State \((r_{\textsc{Reconcile}}, f, f_{s'}) \gets \Call{ReconcilerAgent}{s,S_{\mathrm{raw}},S_{\mathrm{tex}},f,\,s',S'_{\mathrm{raw}},S'_{\mathrm{tex}},f_{s'}}\)
      \If{$r_{\textsc{Reconcile}} = \mathsf{Failure}(e)$}
      \State \Return \(\Call{ReportToHuman}{e}\)
      \Else  \Comment{$r_{\textsc{Reconcile}} = \mathsf{Success}$}
      \State \textbf{continue}
      \Comment{continue with the repaired candidate}
      \EndIf

    \ElsIf{\(r=\mathsf{Failure}(e)\)}
      \State \Return \(\Call{ReportToHuman}{e}\)

    \Else  \Comment{$r = \mathsf{Success}$}
        \State $f.\mathtt{state} \gets \mathtt{approved}$
      \State \Return \(f\)
    \EndIf
  \EndWhile
\end{algorithmic}
\end{algorithm}

\medskip

The {\sc FormaTheoria} workflow may invoke the following agents and procedures. Their roles are summarized below, with detailed descriptions provided in Section~\ref{sec:component-details}.

\paragraph{{Search and OCR Agents}.}
The \textsc{SearchAgent} locates and retrieves the literature identified by the
source item. Its output \(S_{\mathrm{raw}}\) contains the relevant source
material in its retrieved form, such as document pages or page images.
The \textsc{OCRAgent} converts the retrieved material into the transcription
\(S_{\mathrm{tex}}\) used by the subsequent agents. The original material
\(S_{\mathrm{raw}}\) is retained alongside the transcription so that the subsequent agents can consult the source evidence when the transcription is ambiguous.

\paragraph{{Translator Agent}.} {\sc TranslatorAgent} converts each source item into a candidate Lean formalization, preserving the complete mathematical meaning. For a definition, it constructs the declaration type and body; for a theorem, it translates the full statement and leaves the proof to the {\sc Prover}. The agent may use approved formalizations and existing library declarations, and it reports any additional source item required for the translation as a missing dependency to {\sc FormaTheoria}. Each candidate must pass both mechanical Lean checks and an independent semantic review that compares the declaration with the original source materials. If the available materials are ambiguous or insufficient to support a faithful translation, the agent reports a failure for human investigation.

\paragraph{{Prover Procedure}.} The {\sc Prover} constructs a proof of a candidate theorem formalization through an iterative, graph-based procedure. It maintains a proof graph whose nodes represent the target theorem and any helper lemmas introduced during proof construction. {\sc ProveAgent} first attempts to discharge the active proof obligation directly. If the obligation is too difficult, {\sc RefineAgent} decomposes it into simpler helper lemmas; if a proposed obligation is determined to be nonviable, the corresponding branch is pruned and the procedure explores another decomposition. Each successful proof must pass mechanical review, including project elaboration, statement immutability, and the absence of unauthorized axioms. The procedure continues until the root theorem is proved or it reports a missing dependency, incompatibility, or unresolved failure to {\sc FormaTheoria}.

\paragraph{{Reconciler Agent}.} {\sc ReconcilerAgent} resolves incompatibilities between a source item under formalization and a previously formalized dependency, corresponding to issues of types~\textsc{[Mal1]} and~\textsc{[Mal2]} introduced in Section~\ref{sec:source-defects-and-misalignments}. It compares the relevant source materials and Lean formalizations to identify the cause of the conflict and construct a local, source-based repair. Depending on the conflict, the agent may revise the current candidate formalization, introduce a justified compatibility definition or lemma, replace the proof term of an approved theorem while preserving its statement, or revise explicitly authorized theorem statements. Each proposed repair must satisfy the applicable approval rules, remain confined to the authorized scope, and preserve the elaboration of all existing downstream formalizations. If no supported repair can satisfy these requirements, the agent reports the unresolved conflict to {\sc FormaTheoria} for human investigation.

\section{Agent Components and Review Procedures}
\label{sec:component-details}

This section details the principal components invoked by the
{\sc FormaTheoria} workflow and the review procedures governing their
execution. Section~\ref{sec:shared-agent-runtime} introduces the shared agent
framework, which supports multi-turn tool use, persistent artifacts, context
compaction, and review-gated termination. Section~\ref{sec:prover} presents the
graph-based {\sc Prover}, in which {\sc ProveAgent} and {\sc RefineAgent}
iteratively construct and decompose proofs while enforcing elaboration and
statement immutability. Section~\ref{sec:translator}
describes how {\sc TranslatorAgent} produces source-faithful Lean declarations
and subjects them to both mechanical checks and independent semantic review.
Finally, Section~\ref{sec:reconciler-agent} explains how
{\sc ReconcilerAgent} constructs local, source-based repairs for
incompatibilities between formalizations while preserving approved mathematical
content and existing downstream developments.

\subsection{Shared Agent Framework}
\label{sec:shared-agent-runtime}

The agents introduced below instantiate a common framework to perform different tasks within the main workflow. The framework supports tool-based interaction with persistent files, allows agents to operate across multiple large language model turns, and permits termination only after an independent review procedure accepts the proposed output. It thereby separates task-specific agent behavior from the underlying mechanisms for tool execution, context management, and output validation.

\paragraph{Input and output.} The agent framework takes as input a task prompt \(p\) and a review agent
\(\mathcal R\). The prompt specifies the task, available artifacts, expected output, and permitted tools, while the review agent defines the conditions under which the task is considered complete.

During execution, the agent may modify designated persistent files through tool calls. At a termination attempt, the framework parses the proposed output $o$ from the files specified by the task contract and returns it only after \(\mathcal R\) accepts it. We therefore model the agent framework as a function with side effects:
\[
    \Call{AgentFramework}{p,\mathcal R}\longrightarrow o.
\]
The files modified during execution constitute its side effects and serve as persistent, structured artifacts for subsequent use by other agents or external tools.

\paragraph{Framework Description.} Algorithm~\ref{alg:agent-loop} gives the conceptual execution procedure. Specifically, the agent first constructs the initial context \(C_0\) from the task prompt. At turn \(k\), it invokes a large language model, which produces a response \(h_k\) and a finite sequence of tool calls \(\mathbf a_k\). If \(\mathbf a_k\neq\varnothing\), the framework executes the permitted calls, obtains their results \(\mathbf r_k\), appends the response, tool calls, and results to the context, and invokes the model again.

A \emph{stop point} is a turn at which \(\mathbf a_k=\varnothing\). At a stop point, the framework parses the proposed output from the files specified by the task contract. Reaching a stop point does not by itself indicate successful completion, because the model may stop prematurely or incorrectly claim that the task is complete. The framework therefore invokes the review agent \(\mathcal R\). If \(\mathcal R\) rejects the output, the framework adds its findings to the context and resumes execution. The framework terminates and returns the output only when \(\mathcal R\) accepts it.

\begin{algorithm}[t]
\caption{Shared Agent Framework}
\label{alg:agent-loop}
\small
\begin{algorithmic}[1]
  \Require task prompt \(p\) and review agent \(\mathcal R\)
  \Ensure reviewed output \(o\) and modified persistent files

  \State \(C_0\gets[p]\)
  \For{\(k=1,2,\ldots\)}
    \State \((h_k,\mathbf a_k)\gets\mathsf{LargeLanguageModel}(C_{k-1})\)

    \If{\(\mathbf a_k=\varnothing\)}
      \State \(o\gets\Call{ParseOutput}{p}\)
      \State \((v,d)\gets\mathcal R(p,o)\)
      \If{\(v=\mathsf{accept}\)}
        \State \Return \(o\)
      \Else
        \State \(C_k\gets C_{k-1}::[h_k,d]\)
        \Comment{continue with review findings}
      \EndIf
    \Else
      \State Execute \(\mathbf a_k\) and obtain results \(\mathbf r_k\)
      \State \(C_k\gets C_{k-1}::[h_k,\mathbf a_k,\mathbf r_k]\)
    \EndIf
  \EndFor
\end{algorithmic}
\end{algorithm}

\paragraph{Review agent.}
The review agent returns either
\(\mathsf{accept}\) or \(\mathsf{reject}\). A rejection is accompanied by
diagnostic feedback \(d\), which explains the failed checks and guides the
next model turn. Each task-specific agent supplies its own review agent,
allowing the same runtime to enforce different completion requirements. The
review procedures used by {\sc ProveAgent} and {\sc RefineAgent} are described
in Section~\ref{sec:prover}.

\paragraph{Context compaction.}
Formalizing long proofs may require an agent to sustain a reasoning trajectory
across many model turns. We denote by \(C_k\) the model context supplied at
iteration \(k\), including the task specification, output contract, and interaction
history accumulated through that iteration, such as relevant model responses and
tool outputs. This model context is distinct from the Lean context \(\Gamma_s\).

To keep the execution within the model's finite context window, the framework
monitors the token length of \(C_k\). When this length exceeds a configured
threshold \(B_{\mathrm{ctx}}\), the framework replaces the earlier portion of the
interaction history with a compact representation, generated either through an
explicit summarization prompt or through the model provider's native compaction
mechanism. The compacted context preserves the task specification, output contract,
established results, unresolved work, and references to persistent artifacts needed
to continue the task. The agent then resumes execution under the same loop using
the compacted context. Algorithm~\ref{alg:agent-loop} omits this compaction
operation for clarity.

\paragraph{Tool calls.}
The common tool allowlist contains the following operations:
\begin{description}
  \item[\texttt{read}]
  Read a designated file or range without modifying it.

  \item[\texttt{edit}]
  Apply a localized modification to a designated existing file and return the
  resulting diff or an error.

  \item[\texttt{write}]
  Create a designated file or replace its contents when the agent is
  authorized to do so.

  \item[\texttt{bash}]
  Execute a shell command and return its standard output, standard error, and
  exit status. This operation supports repository search, Lean compilation,
  audits, and tests.
\end{description}
Each component may impose further restrictions on this common allowlist. In
particular, access to \texttt{bash} does not allow an agent to bypass its file
permissions or acceptance conditions.

\subsection{{\sc Prover}}
\label{sec:prover}

The proof of a substantial mathematical theorem may span many pages in the
source and require tens of thousands of lines of Lean code. Such a task
typically exceeds what a current language-model agent can complete reliably
within a single uninterrupted trajectory. The {\sc Prover} addresses this
long-horizon difficulty by alternating between two agents. {\sc ProveAgent}
attempts to prove one active theorem or helper lemma, while {\sc RefineAgent}
decomposes a proof obligation that is too difficult to solve directly. A
persistent directed acyclic graph records this decomposition and the progress
made across successive agent runs.\footnote{A lemma-graph mechanism implemented in AIM after the release of its initial report provided one point of reference for this design. For the broader natural-language proof workflow and pessimistic verification methodology, see~\cite{liu2025aimathematician,huang2026pessimistic}.}

\paragraph{Input and output.}
The {\sc Prover} has the interface
\[
  (s,S_{\mathrm{raw}},S_{\mathrm{tex}},f)
  \longrightarrow (r,f'),
\]
where \(s\) is the source item, \(S_{\mathrm{raw}}\) is the corresponding raw
literature file, \(S_{\mathrm{tex}}\) is its reconstructed \LaTeX\ source, and
\(f\) is a candidate formalization containing the fixed statement to be
proved. The output \(f'\) is the resulting formalization, and the return status
belongs to
\[
  r\in
  \left\{
    \mathsf{Success},
    \mathsf{Failure}(e),
    \mathsf{MissingDependency}(s'),
    \mathsf{Conflict}(s')
  \right\}.
\]

The status \(\mathsf{Success}\) indicates that the proof has been completed and
has passed the review procedure. A missing dependency or conflict is returned
to {\sc FormaTheoria}, which formalizes the dependency or invokes
{\sc ReconcilerAgent}, respectively. A failure records an error \(e\) that
requires human investigation, as specified in
Algorithm~\ref{alg:constructor}.

\paragraph{Procedure description.} Algorithm~\ref{alg:proof-loop} describes the {\sc Prover}. Its state is
represented by a directed acyclic graph \(G_s\), an active path
\(\mathbf p\), and an active node \(a\), as defined below. Each node $a$ of \(G_s\)
represents either the target formalization \(f\) or a helper lemma introduced
to support its proof, and has a (prover) state
\[
  a.\mathsf{prover\_state} \in \{\mathsf{proposed},\mathsf{proved},\mathsf{pruned}\}.
\]
A proposed node remains to be proved, a proved node has passed the applicable
review procedure, and a pruned node records a proof obligation that
{\sc ProveAgent} has judged not viable.

The root of \(G_s\) is the target formalization \(f\). An edge \(u\to v\)
indicates that \(v\) was introduced as a simpler proof obligation intended to
support the proof of \(u\). The active path \(\mathbf p\) is a selected path
from the root to the node currently being processed, and its final node is the
active node \(a\).

{\sc ProveAgent} first attempts to prove \(a\) directly. If it succeeds, the
algorithm marks \(a\) as proved and returns to its preceding node on the
active path. If the agent returns \(\mathsf{TooHard}(e)\),
{\sc RefineAgent} augments \(G_s\) with one or more helper lemmas and extends
the active path to a newly selected node. If the active node is pruned, the
algorithm also returns to the preceding node and may pursue a different
decomposition. This process continues until the root is proved or the
{\sc Prover} returns a condition to {\sc FormaTheoria}.

The graph serves both as a proof plan and as a record of earlier
attempts. In particular, pruned nodes help prevent {\sc ProveAgent} from
repeatedly following the same unsuccessful decomposition. Whenever the model
context is compacted, the current graph, active path, and active node are
retained in the compacted context. Section~\ref{sec:analysis-long-horizon} quantifies the execution horizons that this mechanism must support.

\begin{algorithm}[t]
\caption{{\sc Prover}}
\label{alg:proof-loop}
\small
\begin{algorithmic}[1]
  \Require source item \(s\), source files \(S_{\mathrm{raw}}\) and
  \(S_{\mathrm{tex}}\), and candidate formalization \(f\)
  \Ensure return status \(r\) and resulting formalization \(f'\)

  \State \(G_s\gets(\{f\},\varnothing)\),  \(f.\mathsf{prover\_state}\gets\mathsf{proposed}\),  \(\mathbf p\gets[f]\)

  \While{\(f.\mathsf{prover\_state}\neq\mathsf{proved}\)}
    \State \(a\gets \text{last element of}~{\mathbf p}\)
    \State \(r\gets
      \Call{ProveAgent}{s,S_{\mathrm{raw}},S_{\mathrm{tex}},f,a}\)

    \If{\(r=\mathsf{Success}\)}
      \State \(a.\mathsf{prover\_state}\gets\mathsf{proved}\)
       \State Remove \(a\) from the end of \(\mathbf p\)
    \ElsIf{\(r=\mathsf{NotProvable}(e)\)}
      \State \(a.\mathsf{prover\_state}\gets\mathsf{pruned}\)      \State Remove \(a\) from the end of \(\mathbf p\)
      \If{\(a=f\)}
        \State \Return \((\mathsf{Failure}(e),f)\)
      \EndIf
    \ElsIf{\(r=\mathsf{MissingDependency}(s')\)}
      \State \Return \((\mathsf{MissingDependency}(s'),f)\)

    \ElsIf{\(r=\mathsf{Conflict}(s')\)}
      \State \Return \((\mathsf{Conflict}(s'),f)\)

    \Else
      \State \(e\gets\Call{FailureMessage}{r}\)
      \Comment{\(r=\mathsf{TooHard}(e)\)}
      \State \((G_s,\mathbf p)\gets
      \Call{RefineAgent}{s,S_{\mathrm{raw}},S_{\mathrm{tex}},f,G_s,\mathbf p,a,e
      }\)
      \If{\(\lvert G_s\rvert>B_{\mathrm{graph}}\)}
        \State \Return
        \((\mathsf{Failure}(\text{graph size limit exceeded}),f)\)
      \EndIf
    \EndIf
  \EndWhile

  \State \(f'\gets\Call{ReadFormalization}{f}\)
  \State \Return \((\mathsf{Success},f')\)
\end{algorithmic}
\end{algorithm}

\paragraph{{\sc ProveAgent}.}
{\sc ProveAgent} runs under the common agent framework described in
Section~\ref{sec:shared-agent-runtime}. It takes
\[
  (s,S_{\mathrm{raw}},S_{\mathrm{tex}},f,a)
\]
as input, where \(a\) is the currently active formalization, and may modify the
authorized Lean source files. It returns one of
\[
  \left\{
    \mathsf{Success},
    \mathsf{TooHard}(e),
    \mathsf{NotProvable}(e),
    \mathsf{MissingDependency}(s'),
    \mathsf{Conflict}(s')
  \right\}.
\]
Its task prompt instructs it to replace the proof placeholder in \(a\) with a
Lean proof supported by the supplied sources, without changing any existing
theorem statement or definition. The review agent used by {\sc ProveAgent}, together with the implementation of its review checks, is specified later in this subsection.

The status \(\mathsf{TooHard}(e)\) indicates that the active obligation should
be decomposed into simpler helper lemmas. By contrast,
\(\mathsf{NotProvable}(e)\) indicates that {\sc ProveAgent} does not regard
the current node as a viable proof obligation, for example, because it appears
mathematically false or because the agent cannot identify a valid proof.

{\sc ProveAgent} returns \(\mathsf{MissingDependency}(s')\) when proving the
active node requires a source item \(s'\) whose formalization is not yet
available. It returns \(\mathsf{Conflict}(s')\) when it detects either of the
following cases:
\begin{enumerate}
    \item The formalization of \(s'\) is available but cannot be used consistently
    in the formalization of \(s\). This case corresponds to an issue of
    type~\textsc{[InT2]} introduced in
    Section~\ref{sec:source-defects-and-misalignments}.
    \item Sufficient evidence suggests that \(s\), as translated, may be
    unprovable, and the agent identifies a downstream source item \(s'\) that
    invokes \(s\). The agent proposes this conflict so that the statement of \(s\)
    can be compared with the context in which \(s'\) invokes it. This case
    corresponds to an issue of type~\textsc{[InT1]}.
\end{enumerate}
When \(\mathsf{Conflict}(s')\) is returned, {\sc FormaTheoria} invokes
{\sc ReconcilerAgent} to investigate the potential incompatibility and construct
an appropriate repair.

\paragraph{{\sc RefineAgent}.}
{\sc RefineAgent} is invoked when {\sc ProveAgent} returns
\(\mathsf{TooHard}(e)\). It receives the source materials, the target
formalization, the current graph \(G_s\), the active path \(\mathbf p\), the
active node \(a\), and the explanation \(e\):
\[
(s,S_{\mathrm{raw}},S_{\mathrm{tex}},f,G_s,\mathbf p,a,e).
\]
Its output is an updated graph and active path:
\[
  (G_s',\mathbf p').
\]

{\sc RefineAgent} analyzes why the current obligation is too difficult,
proposes simpler helper lemmas, and adds the corresponding nodes and edges to
\(G_s\). It then extends the active path and selects its final node as the next
active node. {\sc RefineAgent} also runs under the common agent framework described in
Section~\ref{sec:shared-agent-runtime}, but it uses
the distinct review procedure specified below.

\paragraph{Review procedures.}
At every stop point, the common agent framework  described in
Section~\ref{sec:shared-agent-runtime} invokes the review agent associated
with the current task. The review agent for {\sc ProveAgent} performs the
following mechanical checks:
\begin{enumerate}
  \item the Lean project elaborates successfully;
  \item no existing theorem statement or definition has been modified, as
  detailed under ``Immutable statements'' later in this subsection; and
  \item if the return status is \(\mathsf{Success}\), the resulting
  formalization does not depend on any additional axioms beyond Lean's
  fundamental axioms.
\end{enumerate}
If any check fails, the review agent returns \(\mathsf{reject}\), together with
a description of the failure. The common agent framework adds this feedback to the
context and resumes {\sc ProveAgent}.

The review agent for {\sc RefineAgent} follows the same procedure except that
it omits the first check: the Lean project is not required to elaborate at a
{\sc RefineAgent} stop point. It retains the second and third checks above.
This distinction allows {\sc RefineAgent} to construct an intermediate proof
decomposition before all newly proposed obligations have been completed.

\paragraph{Immutable statements.}
The second review condition requires a stronger safeguard than an instruction
in the agent prompt. A language-model agent may inadvertently modify an
existing statement, and Lean's metaprogramming facilities make source-level
audits alone insufficient for reliably detecting such changes. The review procedure
therefore uses kernel-level declaration information to enforce statement
immutability.

Before {\sc ProveAgent} or {\sc RefineAgent} begins, the agent snapshots the
current Lean repository and records a statement hash for every existing
top-level Lean declaration. At each stop point, it computes the hashes again
and compares them with the snapshot. If any hash differs, the review agent
rejects the output and reports the modified declarations. The agent loop then
resumes with an explicit instruction to restore the affected statements and
definitions. This check restricts the agents to constructing proofs and
permitted helper declarations without altering existing mathematical
statements.

\subsection{{\sc TranslatorAgent}}
\label{sec:translator}

Translating a mathematical definition or theorem statement into Lean requires
preserving its mathematical meaning while expressing it through Lean's type
system and the APIs available in the project. An incorrect translation may make
the intended theorem unprovable or cause downstream formalizations to rely on a
statement that differs from the source. The principal challenge is therefore
semantic fidelity rather than syntactic correctness alone. The
{\sc TranslatorAgent} addresses this challenge by combining Lean-based
mechanical checks with an independent, LLM-based semantic review.

\paragraph{Input and output.}
The {\sc TranslatorAgent} has the interface
\[
  (s,S_{\mathrm{raw}},S_{\mathrm{tex}})
  \longrightarrow (r,f),
\]
where \(s\) is the source item, \(S_{\mathrm{raw}}\) is the corresponding raw
literature file, and \(S_{\mathrm{tex}}\) is its reconstructed \LaTeX\ source.
The output \(f\) is the candidate formalization of \(s\), and the return status
belongs to
\[
  r\in
  \left\{
    \mathsf{Success},
    \mathsf{Failure}(e),
    \mathsf{MissingDependency}(s')
  \right\}.
\]
The status \(\mathsf{Success}\) indicates that \(f\) has passed the applicable
Lean checks and has been accepted by the semantic review as representing the
same mathematical content as \(s\). The status
\(\mathsf{MissingDependency}(s')\) indicates that the formalization of another
source item \(s'\) is required to express \(s\). The
{\sc FormaTheoria} first formalizes \(s'\) and then restarts the
translation of \(s\). The status \(\mathsf{Failure}(e)\) records an unresolved
error \(e\) and causes the task to be reported for human investigation, as in
Section~\ref{sec:prover}.

\paragraph{Procedure description.}
The {\sc TranslatorAgent} instantiates the common agent framework described in
Section~\ref{sec:shared-agent-runtime}. Its task prompt instructs it to inspect
both \(S_{\mathrm{raw}}\) and \(S_{\mathrm{tex}}\), identify the mathematical
content associated with \(s\), and construct the corresponding Lean
declaration.

If \(s\) is a definition, the type and body of \(f\) must faithfully represent
the definition in the source. If \(s\) is a theorem, the type of \(f\) must
faithfully represent the complete theorem statement, including its parameters,
assumptions, quantifiers, and conclusion, while its proof term is left as
\verb|sorry| for subsequent construction by the {\sc Prover}. During
translation, the agent may invoke Lean to diagnose and repair elaboration
errors. It may use existing library declarations and approved formalizations,
but it may not modify them. If an additional source item is required to state
the current item, the agent returns
\(\mathsf{MissingDependency}(s')\). If the source material is insufficient or
ambiguous and the agent cannot establish a supported translation, it returns
\(\mathsf{Failure}(e)\) rather than selecting an unsupported interpretation.

\paragraph{Review procedure.}
The review procedure for {\sc TranslatorAgent} is particularly important because
an incorrect translation can propagate errors to downstream formalizations.

At each stop point, the common agent framework invokes the review procedure
associated with {\sc TranslatorAgent}. The procedure first performs the following
mechanical checks:
\begin{enumerate}
  \item the Lean project elaborates successfully; and
  \item no existing theorem statement or definition has been modified.
\end{enumerate}
The second condition is enforced by the statement-immutability mechanism
described in Section~\ref{sec:prover}.

These checks establish that Lean accepts the proposed declaration and that
previously formalized mathematical content remains unchanged. However, they do
not establish that \(f\) has the same mathematical meaning as \(s\). The review
procedure therefore also invokes an independent {\sc ReviewAgent} to assess the
semantic fidelity of the translation.

The {\sc ReviewAgent} receives \(s\), \(S_{\mathrm{raw}}\),
\(S_{\mathrm{tex}}\), and the proposed formalization \(f\). Its prompt instructs
it to compare the source with the Lean declaration and identify any semantic
discrepancies. The prompt includes a checklist of recurrent translation errors,
including:
\begin{enumerate}
  \item moving part of a theorem's conclusion into its assumptions;
  \item omitting conditions stated or implicitly required by the source;
  \item confusing existential and universal quantification;
  \item using a Mathlib declaration whose semantics differ subtly from the
  intended mathematical concept; and
  \item replacing a specific typeclass assumption required by the source with a
  different or unjustifiably general assumption.
\end{enumerate}

The checklist is initially authored by humans based in part on recurrent
semantic errors identified during the early stages of the project. After a batch
of {\sc TranslatorAgent} trajectories and the corresponding human interventions
has been collected, a {\sc SelfImproveAgent} updates the checklist to incorporate
additional recurrent failure patterns. The prompt templates used by these agents
are provided in the appendix.

At each stop point, the {\sc ReviewAgent} is invoked with a fresh context. It
receives neither findings from previous reviews nor the internal reasoning
trajectory of {\sc TranslatorAgent}. This separation helps ensure that the
semantic review remains independent of the process that produced the candidate
formalization. If either a mechanical check or the semantic review fails, the
review procedure returns \(\mathsf{reject}\) with concrete findings. The common
agent framework adds these findings to the context and resumes
{\sc TranslatorAgent}. The translation terminates successfully only after all
applicable checks accept the proposed formalization.

\subsection{{\sc ReconcilerAgent}}
\label{sec:reconciler-agent}

{\sc ReconcilerAgent} addresses incompatibilities between a source item under
formalization and previously formalized material, corresponding to issues of types~\textsc{[Mal1]} and~\textsc{[Mal2]} in
Section~\ref{sec:source-defects-and-misalignments}. When such an incompatibility is
detected, the agent compares the relevant source materials and Lean
formalizations and attempts to construct a source-based repair.

\paragraph{Input and output.}
{\sc ReconcilerAgent} has the interface
\[
  (s,S_{\mathrm{raw}},S_{\mathrm{tex}},f,\,
   s',S'_{\mathrm{raw}},S'_{\mathrm{tex}},f')
  \longrightarrow
  (r,\widetilde f,\widetilde f'),
\]
where \(s\) is the source item currently being formalized and \(s'\) is a source
item on which \(s\) depends. The objects \(S_{\mathrm{raw}}\) and
\(S_{\mathrm{tex}}\) are, respectively, the raw literature file containing
\(s\) and its reconstructed \LaTeX\ source; the primed objects provide the
corresponding materials for \(s'\). Finally, \(f\) and \(f'\) are the current
formalizations of \(s\) and \(s'\).

The agent is invoked under the following two cases, corresponding to $\mathtt{Conflict}$ cases returned by {\sc ProveAgent}, as described in Section~\ref{sec:prover}:
\begin{enumerate}
    \item \(f\) cannot use \(f'\) as required by the dependency
of \(s\) on \(s'\); for example, some arguments required by \(f'\) may not be
constructible in the context of \(f\), or the type or term obtained from \(f'\)
may differ from what \(f\) expects.
    \item {\sc Prover} determines that theorem \(s\) is unprovable as translated and identifies a potential conflict involving a previously
formalized source item \(s'\) that invokes \(s\).
\end{enumerate}

The outputs \(\widetilde f\) and \(\widetilde f'\) are the formalizations
resulting from the attempted reconciliation, and
\[
  r\in\left\{
    \mathsf{Success},
    \mathsf{Failure}(e)
  \right\}.
\]
The status \(\mathsf{Success}\) indicates that the incompatibility has been
resolved and that the resulting Lean project has passed the applicable review
procedure. The status \(\mathsf{Failure}(e)\) records an unresolved conflict
together with an explanation \(e\). {\sc FormaTheoria} reports such a
failure for human investigation.

\paragraph{Reconciliation procedure and authorized revisions.}
{\sc ReconcilerAgent} instantiates the common agent framework described in
Section~\ref{sec:shared-agent-runtime}. Its task prompt instructs it to inspect
both source items, their surrounding source materials, and their Lean
formalizations before determining the cause of the incompatibility. Depending
on the conflict, the agent may revise the current candidate formalization,
introduce a justified compatibility definition or lemma, or replace the proof
term of an approved theorem while preserving its statement. All modifications must comply with the approval rules, under which the {\sc ReconcilerAgent} is additionally \emph{authorized} to revise the statements (or definition bodies) of both \(f\) and \(f'\).

No uniform repair rule suffices for all conflicts. The agent is therefore instructed to seek
a best-effort, source-based resolution subject to two principles:
\begin{enumerate}
  \item the repair should be local and minimal, avoiding unnecessary changes to
  the surrounding formalization; and
  \item all existing downstream uses of the affected formalizations must
  continue to elaborate.
\end{enumerate}
If the available sources do not support a repair satisfying these requirements,
the agent returns \(\mathsf{Failure}(e)\) rather than altering protected
mathematical content merely to make the project elaborate.

\paragraph{Review procedure.}
At each stop point, the common agent framework invokes the review procedure
associated with {\sc ReconcilerAgent}. The procedure performs the following
mechanical checks:
\begin{enumerate}
  \item the Lean project elaborates successfully, and all existing downstream
  formalizations continue to elaborate without modification to definitions or theorem statements; and
  \item all changes are confined to the declarations and auxiliary constructions
  authorized by the reconciliation task, and every approved formalization
  continues to satisfy the applicable immutability rules.
\end{enumerate}

\section{Implementation Considerations}
\label{sec:implementation-considerations}

This section describes implementation choices that improve the efficiency of the
{\sc FormaTheoria} workflow without changing its formalization model or approval
rules.

\subsection{Section-Level Context Sharing}
\label{sec:context-sharing}

Context sharing allows information accumulated during one agent execution to be
carried forward to subsequent executions on related tasks. This mechanism is
particularly useful in source-based formalization because related source items
often rely on the same mathematical background, literature dependencies, and
Mathlib declarations. Reusing this information reduces the token and time cost of
reconstructing the relevant context for each source item independently.

To avoid confusion with the Lean context \(\Gamma_s\), we use \emph{model context}
to refer to the information supplied to an agent during its execution. Instances
of {\sc TranslatorAgent} and {\sc ProverAgent} that process related source items
reuse a common, evolving model context. We define relatedness at the section level:
source items are treated as related when they occur in the same section of a book
or paper. This boundary reflects the organization of mathematical literature,
where items within a section typically share notation, standing assumptions,
definitions, intermediate results, and relevant external dependencies. At the
same time, it limits the inclusion of information from unrelated parts of the
source corpus.

Accordingly, when Algorithm~\ref{alg:agent-loop} is invoked for a source item, its
initial context \(C_0\) is initialized from the most recent model context available
for that section at the time of invocation. The resulting section-level model
context retains information accumulated while processing earlier source items,
including relevant source passages, previously identified literature dependencies,
and useful Mathlib declarations. Discoveries made during one agent execution can
therefore be reused by subsequent {\sc TranslatorAgent} and {\sc ProveAgent}
instances working within the same section. Section-level context sharing thus
amortizes the token cost of reconstructing mathematical background and discovering
literature and Mathlib dependencies across multiple translation and proof-construction tasks.

\subsection{Dependency-Aware Batch Parallelization}
\label{sec:batch-parallelization}

Independent source items can be formalized concurrently, thereby reducing the
wall-clock time required to process a large source corpus. Parallel execution must,
however, respect dependencies discovered during formalization and coordinate
modifications to the shared persistent library. We therefore combine batch-level
parallelism with dependency-aware synchronization and globally serialized
reconciliation.

In our implementation, source items are organized into batches, each typically
corresponding to a section of a book or paper. Subject to the dependencies known at
the time of scheduling, {\sc FormaTheoria} is invoked in parallel for all source
items in a batch. Because the complete dependency structure is generally unavailable
in advance, an invocation may discover during statement construction or proof
construction that its source item depends on another item currently being processed
by a concurrent invocation.

To coordinate such cases, the system records the active {\sc FormaTheoria}
invocation, if any, associated with each source item. When an invocation discovers
a dependency, it first checks this record. If the dependency is already being
processed, the invocation does not launch a duplicate task. Instead, it waits for
the existing invocation to complete and then resumes using the resulting
formalization. If the dependency is not being processed and has not yet been
formalized, the system invokes {\sc FormaTheoria} for that dependency through the
ordinary recursive workflow. This mechanism allows independent tasks to proceed
concurrently while ensuring that each newly discovered dependency is formalized
only once.

Parallel execution also creates potential write conflicts in the shared persistent
library. In particular, {\sc ReconcilerAgent} may modify existing Lean declarations,
including definitions, theorem statements, and proofs, subject to the approval rules
described in Section~\ref{sec:source-item-formalization}. We therefore treat
reconciliation as a globally exclusive operation. At most one
{\sc ReconcilerAgent} may be active at any time. Before it begins, all other
formalization processes are paused and remain suspended until the reconciliation
has completed and its changes have been committed to the persistent library. The
suspended processes then resume against the updated library state. This
serialization prevents other processes from reading or modifying declarations
while they are being reconciled.

\section{Empirical Findings from Large-Scale Formalization}
\label{sec:analysis}

This section analyzes a completion-time snapshot of the formalization.  The snapshot contains more than 994,000 lines of Lean in more than 850 files.  More than 10\% of these lines lie outside the combined dependency closures of the Odd Order theorem and the Bender--Suzuki theorem.  Repository size should therefore not be identified with the size of either proof: the repository also contains declarations supporting other results and intermediate developments that are not used by either root theorem.  Unless stated otherwise, all counts below refer to this snapshot and do not include later edits.  We use these measurements to identify structural phenomena, but avoid treating lines of code, tokens, or elapsed time as direct measures of mathematical difficulty.

\subsection{Distributed Sources and Deep Dependencies}
\label{sec:analysis-c1}

This subsection provides empirical evidence for Challenge~\textsc{C1} (Section~\ref{sec:challenges}).  We examine the challenge at three levels: the expansion of the external source corpus, the weak relationship between source length and formalization burden, and the depth and connectivity of the resulting formal dependency graph.

\subsubsection{Distributed Source Discovery}

The formalization draws on 15 books and papers, covering 1,037 consulted pages (Table~\ref{tab:source-literature}).  The three sources supplied by a human collaborator---the books of Bender--Glauberman and Peterfalvi \cite{bender1994local,peterfalvi2000character}, together with the fourth volume of Gorenstein--Lyons--Solomon \cite{gorenstein1999classification4}---account for 357 pages (34.4\%).  Following dependencies led the workflow to twelve additional sources covering 680 pages (65.6\%).  Thus, dependency discovery expanded the corpus fivefold, from three entry points to fifteen sources, and added about 1.9 times as much consulted material by page count as was initially supplied.\footnote{Page density varies across books and papers, so these totals are a coarse measure of source coverage rather than a normalized measure of mathematical content.} The significance for Challenge~\textsc{C1} is not merely that the corpus became larger.  Almost two-thirds of the consulted pages lay outside the initially supplied sources, so the boundary of the formalization task was not known in advance.  The workflow had to discover the source corpus while simultaneously constructing the formal dependency graph: when a proof reached a dependency across a source boundary, progress could require locating the relevant passage, identifying and reconstructing its own prerequisites, and inserting the resulting declarations before the downstream item could resume.  A single omitted or misidentified dependency could therefore block a long chain even when the local argument at its endpoint was otherwise valid.  The human-supplied books served as entry points rather than a closed specification; empirically, the challenge was the open-ended coupling of source discovery and proof construction.

\begin{table}[t]
  \centering
  \footnotesize
  \setlength{\tabcolsep}{3.5pt}
  \renewcommand{\arraystretch}{1.32}
  \begin{tabularx}{\linewidth}{
      >{\centering\arraybackslash}p{0.035\linewidth}
      >{\raggedright\arraybackslash}p{0.255\linewidth}
      >{\raggedright\arraybackslash}X
      >{\raggedright\arraybackslash}p{0.205\linewidth}
      }
    \toprule
    No. & Authors                                                                                                                   & Source & Pages Consulted \\
    \midrule
    1   & Helmut Bender and George Glauberman
        & Local Analysis for the Odd Order Theorem \cite{bender1994local}
        & 1--152                                                                                                                                               \\
    2   & Thomas Peterfalvi
        & Character Theory for the Odd Order Theorem \cite{peterfalvi2000character}
        & 1--150                                                                                                                                               \\
    3   & Daniel Gorenstein, Richard Lyons, and Ronald Solomon
        & The Classification of the Finite Simple Groups, Number 4 \cite{gorenstein1999classification4}
        & 19--64, 245--253                                                                                                                                     \\
    4   & Marshall Hall, Jr.
        & The Theory of Groups \cite{hall1959theory}
        & 200--217                                                                                                                                             \\
    5   & I. Martin Isaacs
        & Character Theory of Finite Groups \cite{isaacs1976character}
        & 1--11, 78--121, 262--285                                                                                                                             \\
    6   & Daniel Gorenstein
        & Finite Groups \cite{gorenstein1968finite}
        & 102--112, 217--238, 270--280                                                                                                                         \\
    7   & Graham Higman
        & Suzuki 2-Groups \cite{higman1963suzuki}
        & 79--96                                                                                                                                               \\
    8   & Bertram Huppert
        & Endliche Gruppen I \cite{huppert1967endliche}
        & 135--142, 153--273, 448--465, 526--539                                                                                                               \\
    9   & Bertram Huppert and Norman Blackburn
        & Finite Groups III \cite{huppert1982finite3}
        & 28--35, 172--195, 219--227, 246--291                                                                                                                 \\
    10  & Michio Suzuki
        & Group Theory II \cite{suzuki1986group2}
        & 120--166, 266--286                                                                                                                                   \\
    11  & Daniel Gorenstein, Richard Lyons, and Ronald Solomon
        & The Classification of the Finite Simple Groups, Number 2 \cite{gorenstein1996classification2}
        & 53--61, 72--78, 96--106, 140--147                                                                                                                    \\
    12  & Christoph Hering
        & On Finite Groups Operating Doubly Transitively on Their Involutions \cite{hering1971finite}
        & 456--458                                                                                                                                             \\
    13  & Walter Feit
        & The Representation Theory of Finite Groups \cite{feit1982representation}
        & 140--247, 450--475                                                                                                                                   \\
    14  & Thomas Peterfalvi
        & Le th\'eor\`eme de Bender--Suzuki I \cite{peterfalvi1986bendersuzuki}
        & 193--197                                                                                                                                             \\
    15  & Helmut Bender
        & Transitive Gruppen gerader Ordnung, in denen jede Involution genau einen Punkt festl\"a{\ss}t \cite{bender1971transitive}
        & 527--554                                                                                                                                             \\
    \bottomrule
  \end{tabularx}
  \caption{Source literature consulted in the formalization.}
  \label{tab:source-literature}
\end{table}

These corpus-level counts capture the breadth of Challenge~\textsc{C1} across sources, but source discovery also exposes a second layer of dependency reconstruction within individual texts. Mathematical sources differ substantially in their expository organization. Some (e.g.~\cite{bender1994local}) adopt a declaration-oriented style, in which definitions, lemmas, and theorems are presented as locally self-contained units with explicit hypotheses and conclusions. Others (e.g.~\cite{peterfalvi2000character}) use a running-context style, in which a section develops a mathematical configuration through consecutively numbered items whose roles may vary: some state results, while others introduce notation, record intermediate observations, or continue the surrounding discussion. Later results can consequently inherit objects, assumptions, and intermediate facts from an extended preceding context. Locating the relevant source passage is then only the first step: before that passage can be formalized, the workflow must recover the evolving local context that gives it complete logical content and determine which parts of that context belong in the formal dependency graph. Thus, a newly discovered source may enlarge not only the document corpus but also a dependency context whose boundary becomes visible only as the downstream formalization proceeds.

\subsubsection{Source Length and Formalization Burden}

\begin{table}[t!]
  \centering
  \scriptsize
  \setlength{\tabcolsep}{2.6pt}
  \renewcommand{\arraystretch}{1.03}
  \resizebox{0.9\textwidth}{!}{%
  \begin{tabular}{lrrr@{\hspace{7pt}}|@{\hspace{7pt}}lrrr@{\hspace{7pt}}|@{\hspace{7pt}}lrrr}
    \toprule
    \multicolumn{4}{c@{\hspace{7pt}}|@{\hspace{7pt}}}{Bender--Glauberman (BG)} &
    \multicolumn{4}{c@{\hspace{7pt}}|@{\hspace{7pt}}}{Peterfalvi (PF) Part~I}  &
    \multicolumn{4}{c}{Peterfalvi (PF) Part~II and appendices}                                                                                            \\
    Section                                & Local  & Pages & Library & Section & Local  & Pages & Library & Section      & Local  & Pages & Library \\
    \midrule
    Sec.~1                                 & 4,009  & 8     & 19,528  & Sec.~1  & 12,236 & 2     & 58,913  & II.1.1       & 6,510  & 3     & 15,362  \\
    Sec.~2                                 & 174    & 8     & 19,531  & Sec.~2  & 7,639  & 5     & 28,363  & II.1.2       & 5,654  & 1     & 97,918  \\
    Sec.~3                                 & 24,842 & 16    & 48,636  & Sec.~3  & 19,287 & 5     & 56,918  & II.1.3       & 13,885 & 4     & 66,484  \\
    Sec.~4                                 & 14,140 & 11    & 38,973  & Sec.~4  & 11,882 & 6     & 56,833  & II.2         & 29,261 & 7     & 106,715 \\
    Sec.~5                                 & 6,838  & 5     & 38,473  & Sec.~5  & 19,119 & 5     & 48,272  & II.3.1       & 11,115 & 3     & 108,413 \\
    Sec.~6                                 & 6,605  & 6     & 39,408  & Sec.~6  & 25,513 & 8     & 59,113  & II.3.2       & 2,573  & 1     & 54,696  \\
    Sec.~7                                 & 5,287  & 6     & 42,270  & Sec.~7  & 6,648  & 6     & 29,574  & II.3.3       & 4,699  & 3     & 67,181  \\
    Sec.~8                                 & 7,176  & 3     & 42,432  & Sec.~8  & 19,291 & 6     & 75,104  & II.4.1       & 2,743  & 1     & 2,291   \\
    Sec.~9                                 & 6,425  & 5     & 42,657  & Sec.~9  & 73,314 & 8     & 69,647  & II.4.2       & 8,397  & 6     & 39,994  \\
    Sec.~10                                & 16,858 & 11    & 51,573  & Sec.~10 & 42,717 & 6     & 78,319  & II.4.3       & 5,507  & 3     & 53,663  \\
    Sec.~11                                & 5,005  & 3     & 51,547  & Sec.~11 & 24,995 & 5     & 78,431  & II.4.4       & 3,122  & 3     & 58,562  \\
    Sec.~12                                & 30,374 & 14    & 52,153  & Sec.~12 & 19,884 & 6     & 78,663  & Appendix~I   & 2,572  & 2     & 97,270  \\
    Sec.~13                                & 11,952 & 8     & 51,954  & Sec.~13 & 54,809 & 12    & 80,029  & Appendix~II  & 3,230  & 2     & 98,308  \\
    Sec.~14                                & 20,985 & 12    & 52,595  & Sec.~14 & 32,388 & 6     & 97,151  & Appendix~III & 2,953  & 5     & 48,436  \\
    Sec.~15                                & 23,516 & 6     & 52,850  &         &        &       &         & Appendix~IV  & 11,862 & 6     & 58,787  \\
    Sec.~16                                & 11,074 & 12    & 52,931  &         &        &       &         &              &        &       &         \\
    Appendix~C                             & 8,068  & 6     & 50,178  &         &        &       &         &              &        &       &         \\
    \bottomrule
  \end{tabular}}
  \caption{Lean lines of code (LOC) and recorded source pages by
    section. ``Local'' denotes project LOC and ``Library'' the relevant Mathlib
    dependency footprint.
    }
  \label{tab:section-cost}
  \vspace{0.5em}
  \includegraphics[width=\linewidth]{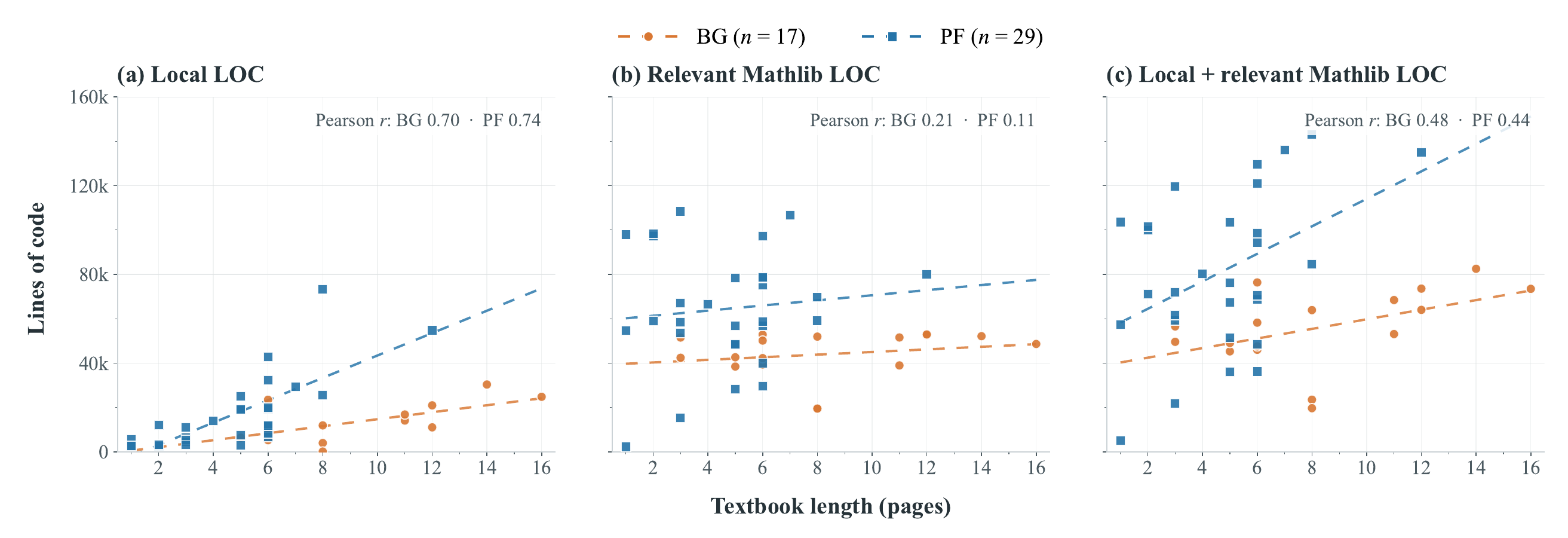}
  \captionof{figure}{Lean lines of code (LOC) against the number of source pages for
    all 46 rows in Table~\ref{tab:section-cost}. The three panels separate local
    project code, the relevant Mathlib dependency footprint, and their sum.}
  \label{fig:line-and-page}
\end{table}

Table~\ref{tab:section-cost} and Figure~\ref{fig:line-and-page} show that page count has a conditional, rather than universal, relationship with formalization size. Across all 46 sections, source pages and local Lean lines of code (LOC) have Pearson correlation $r=0.47$. Within each principal source family the association is stronger ($r=0.70$ for BG and $r=0.74$ for PF), yet the two families occupy visibly different vertical ranges in the left panel. Source identity therefore changes the baseline cost against which page length is informative. The eight-page rows make this concrete: BG Section~1 has 4,009 local LOC, whereas PF Section~9 has 73,314, more than eighteen times as many. Page count records how much space an author devotes to an argument, but not how much context is implicit, how much infrastructure the formal representation requires, or how finely the workflow decomposes the argument.

The tabular aggregates quantify the separation visible in the scatter plot. The BG rows contain 203,328 local LOC for 140 source pages, approximately 1,452 LOC per page. The PF rows, including Part~II and the appendices, contain 483,805 local LOC for 136 pages, approximately 3,557 LOC per page. Thus PF has about 2.45 times the observed formalization density of BG despite nearly equal page totals. This difference is not introduced by Peterfalvi Part~II: Part~I alone contains 369,722 local LOC for 86 pages, approximately 4,299 LOC per page. The comparison therefore points to a source-family effect, not simply to one unusually long auxiliary development.

The other two panels show why even this local-code relationship should not be interpreted as a predictor of total dependency footprint. Source pages have correlation $r=-0.12$ with relevant Mathlib LOC and only $r=0.12$ with local and relevant Mathlib LOC combined. Among the one-page PF sections, for example, Part~II Section~4.1 has 5,034 combined LOC, whereas Part~II Section~1.2 has 103,572---a factor of more than twenty. The relevant-Mathlib median is 48,636 LOC for BG and 59,113 for PF; the respective ranges are 19,528--52,931 and 2,291--108,413. A larger footprint can indicate broader reuse, heavier prerequisites, or both, and is not a direct count of missing lemmas. Taken together, the higher PF local density and much wider library footprint are consistent with the working hypothesis that BG draws on a more concentrated body of theory while PF crosses more varied interfaces. A direct audit of project-local foundational declarations would still be required to conclude that Mathlib support is intrinsically better for BG; proof style, agent decomposition, and source presentation remain potential confounders.

Section-level code volume measures the size of the formalization, but not the dependency structure that makes the task long-horizon. We therefore turn from source and code volume to the depth and connectivity of the declaration graph.

\subsubsection{Depth and Connectivity of the Declaration Dependency Graph}
\label{sec:analysis-dependency-graph}
\label{sec:analysis-long-horizon}

\paragraph{Graph definition and scope.} The formalized development induces a dependency graph over its Lean declarations. Here a declaration is a theorem or a definition. Let $G=(V,E)$ denote this graph, where each node in $V$ is a declaration and an edge $u\to v$ records that the declaration $u$ uses $v$. Thus, edges point from a declaration to one of its prerequisites. Lean's declaration discipline makes this dependency relation acyclic.

We analyze the induced subgraph whose nodes satisfy both of the following conditions: they are reachable from the declaration of the Bender--Suzuki theorem, and they belong to the formalization project rather than Mathlib. This subgraph contains 30,298 nodes and 186,187 edges.

\paragraph{Dependency depth and structural profile.} To quantify dependency depth, for each node $u\in V$ we compute the length of a longest directed path beginning at $u$:
\[
  d(u)=\max\{k:\text{there exists }u=u_0\to u_1\to\cdots\to u_k\}.
\]
The resulting distribution is shown in Figure~\ref{fig:maximum-path}. In the Bender--Suzuki project-only closure, the root theorem has depth 458. A corresponding root-specific computation for the Feit--Thompson theorem gives depth 430. These root depths place the final results at the end of exceptionally long chains of project declarations. For Challenge~\textsc{C1}, they quantify the long dependency horizon that must be maintained while composing the formalization.

\begin{figure}[t!]
  \centering
  \includegraphics[width=\linewidth]{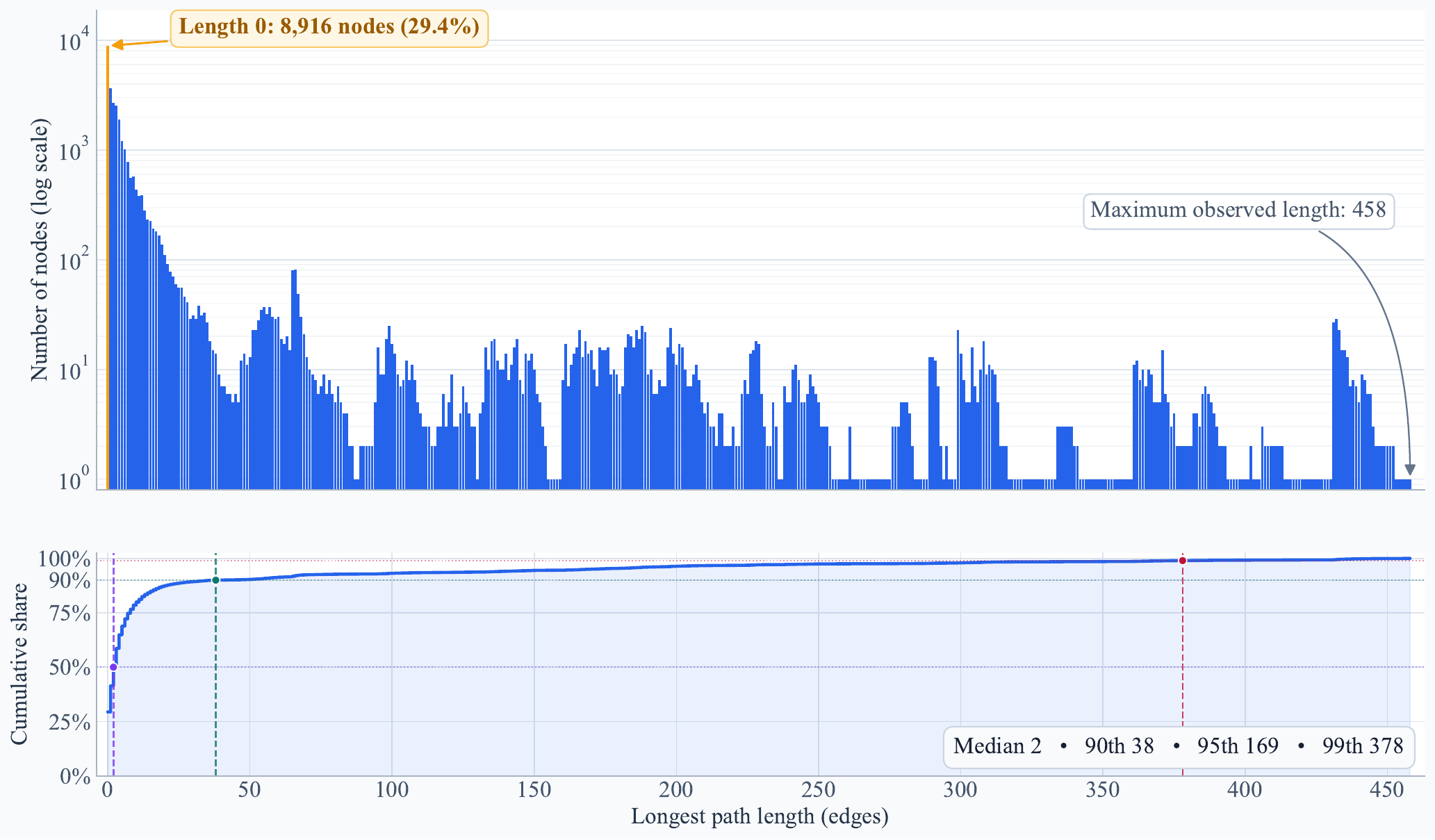}
  \caption{Distribution of the maximum path length starting from each project declaration in the Bender--Suzuki dependency closure.}
  \label{fig:maximum-path}
\end{figure}

The distribution is strongly skewed. Of the 30,298 project declarations, 8,916 (29.4\%) have depth zero, and the median depth is only two. Yet the 90th, 95th, and 99th percentiles are 38, 169, and 378, respectively. In particular, the depth increases by 209 between the 95th and 99th percentiles, even though half of the declarations have depth at most two. This contrast is a notable feature of the graph: the development is not uniformly deep, but has a broad shallow base together with a small, exceptionally deep backbone. The shape is more informative than the maximum alone. It localizes the long-horizon aspect of Challenge~\textsc{C1} to a relatively small upper tail of declarations that carries the long-range composition needed by the root theorem.

The histogram also contains narrow high-depth regions. In the interval $[342,360]$, for example, exactly one declaration occurs at each depth. The run records associate these declarations with the development around equations (13.6)--(13.12) of \cite{peterfalvi2000character}. The one-per-level pattern is consistent with a narrow logical bridge: at these depths the graph has little visible width, so progress toward the root may depend on a nearly serial piece of formal development.\footnote{The depth histogram alone, however, does not prove that the declarations form one adjacent path; that stronger statement requires direct inspection of the intervening edges.}

\paragraph{Reuse hubs and integration points.} For a declaration $v$, its \emph{in-degree} is the number of declarations that reference $v$, while its \emph{out-degree} is the number of declarations referenced by $v$. The five declarations with the largest values of each quantity are listed in Tables~\ref{tab:declaration-in-degree} and~\ref{tab:declaration-out-degree}.

\begin{table}[t!]
  \centering
  \small
  \begin{tabularx}{\linewidth}{
      >{\centering\arraybackslash}p{0.06\linewidth}
      >{\raggedright\arraybackslash}X
      >{\raggedright\arraybackslash}p{0.16\linewidth}
      >{\raggedleft\arraybackslash}p{0.12\linewidth}
      }
    \toprule
    Rank & Declaration                                   & Category   & In-degree \\
    \midrule
    1    & \path{Section1.ClassFunction}                 & Definition & 6,016     \\
    2    & \path{Section1.IsIrreducibleCharacterOnGroup} & Definition & 1,834     \\
    3    & \path{section9MaximalSubgroups}               & Definition & 1,792     \\
    4    & \path{ambientDerivedSubgroup}                 & Definition & 1,787     \\
    5    & \path{subgroupCentralizerIn}                  & Definition & 1,549     \\
    \bottomrule
  \end{tabularx}
  \caption{Top-five declarations by in-degree.}
  \label{tab:declaration-in-degree}
  \vspace{0.75em}
  \begin{tabularx}{\linewidth}{
      >{\centering\arraybackslash}p{0.06\linewidth}
      >{\raggedright\arraybackslash}X
      >{\raggedright\arraybackslash}p{0.16\linewidth}
      >{\raggedleft\arraybackslash}p{0.12\linewidth}
      }
    \toprule
    Rank & Declaration                                                                                  & Category & Out-degree \\
    \midrule
    1    & \path{BenderSuzuki.External.Higman.lemma12_distinct_pair_gap_classify_basic}                 & Theorem  & 536        \\
    2    & \path{FeitThompson.PFsection9.theorem_9_11_case_a_step_9_11_8_sourceData_of_noPairStep_sec9} & Theorem  & 351        \\
    3    & \path{BenderSuzuki.PFchapter3section1.theorem_c_of_Q1_ne_bot}                                & Theorem  & 239        \\
    4    & \path{Section12.theorem_12_17_lowerBoundData_source_leaf}                                    & Theorem  & 223        \\
    5    & \path{BenderSuzuki.External.Higman.lemma12_chain_typeBCD_with_isomorphic_criterion}          & Theorem  & 147        \\
    \bottomrule
  \end{tabularx}
  \caption{Top-five declarations by out-degree.}
  \label{tab:declaration-out-degree}
\end{table}

The in-degree ranking measures how widely a declaration is reused by the development. All five leading entries are definitions. Four expose recurring character-theoretic or group-theoretic interfaces: \path{Section1.ClassFunction}, \path{Section1.IsIrreducibleCharacterOnGroup}, \path{ambientDerivedSubgroup}, and \path{subgroupCentralizerIn}. The remaining entry, \path{section9MaximalSubgroups}, packages the maximal-subgroup structure used in the Feit--Thompson proof. Their concentration at the top of the ranking shows that a small set of definitions provides interfaces on which a large fraction of the development depends. The concentration is especially pronounced at the first rank: \path{Section1.ClassFunction} is referenced by 6,016 declarations, more than three times the in-degree of the second-ranked entry.

The out-degree ranking measures the breadth of the prerequisite context needed to establish a declaration. All five leading entries are theorems, and their large out-degrees identify proof or source-data packages that assemble many earlier results. These declarations are therefore integration points in the formal dependency structure, rather than the widely reused foundational interfaces identified by the in-degree ranking. The clean separation between the two top-five lists is not built into the degree definitions: high in-degree selects definitions that broadcast a shared vocabulary across the project, whereas high out-degree selects theorems that gather many prerequisites at difficult proof boundaries. Although the top five do not determine the full degree distribution, they expose two complementary forms of concentration that a long-horizon workflow must manage: stable reusable interfaces and localized integration points.

\paragraph{Project structure versus library substrate.} The project-only Bender--Suzuki closure contains 30,298 nodes and 186,187 edges. If Mathlib declarations are retained, the same root-specific closure contains 74,922 nodes and 1,444,499 edges. Including the library therefore multiplies the node count by about 2.5 and the edge count by about 7.8. Equivalently, the number of edges per node rises from about 6.1 in the project-only graph to about 19.3 in the library-inclusive graph. The disproportionate increase in edges is important: Mathlib contributes not merely additional prerequisite nodes, but a much more densely connected substrate. Much of the graph's raw connectivity therefore belongs to the library, whereas the project-only view isolates the structure introduced by this formalization. Reporting both views prevents a deep library dependency closure from being mistaken for an equally large body of newly formalized mathematics, while still recording the library structure on which the project relies.

\paragraph{Operational manifestation of the long dependency horizon.}
The declaration graph measures the static depth of the formalized theory, while the execution traces expose the corresponding operational horizon of the workflow. Across all recorded trajectories, the workflow consumed 64B tokens in total. The longest observed trajectory spanned 9.17 days, consumed 3.5 billion tokens, and underwent 606 context compactions. These measurements show that the largest formalization tasks extend far beyond a single model context or an uninterrupted proof attempt. Progress must instead survive repeated compression of the interaction history while preserving the current proof state, the relevant section-level context, and decisions accumulated over earlier stages. This provides an operational counterpart to the depth of the declaration DAG and motivates the graph-based state tracking mechanism for {\sc Prover} described in Section~\ref{sec:prover}.

\medskip
\noindent
Taken together, these measurements expose two complementary dimensions of Challenge~\textsc{C1}. Source discovery makes the boundary of the task open-ended, while the declaration dependency graph makes its internal horizon exceptionally deep but highly concentrated. The difficulty lies not simply in formalizing many source pages, but in discovering and composing the small set of long dependency chains that connect a broad base of local declarations to the root theorems. The degree rankings further suggest where this composition is organized: reusable definitions stabilize shared interfaces, while a small number of theorems integrate unusually broad prerequisite contexts.

\subsection{Source Defects and Misalignments Identified during the Formalization Process}
\label{sec:issues}

In this section, we present representative mathematical issues identified in the
literature during our project and classify each issue according to the source
defects and misalignments defined in
Section~\ref{sec:source-defects-and-misalignments}. Most of these issues were resolved automatically by
the workflow, either through the {\sc Prover} procedure or by
{\sc ReconcilerAgent}. Because such automatically resolved issues are recorded only
in execution logs and are not reported separately by the workflow, our retrospective
review may not have recovered every instance. A smaller number of issues could not
be resolved automatically and were escalated for human investigation; we report all
such cases encountered in the project.

The first two examples arose while formalizing the proof of the Odd Order Theorem
based on \cite{bender1994local,peterfalvi2000character}. The existing Rocq
formalization~\cite{gonthier2013machine} may also have encountered these issues.
Our workflow nevertheless identified them independently. Moreover, for one of these
issues, namely the incompatibility between the definitions of type~I maximal
subgroup, we adopted a different resolution, as described below.

\paragraph{Incompatibility between definitions of type I maximal subgroups \textnormal{\textsc{[Mal1]}}.} The proof of the Odd Order Theorem proceeds by choosing a minimal
counterexample \(G\) and analyzing the structure of its maximal subgroups.
These maximal subgroups are divided into five classes, called types~I--V.
However, \cite{bender1994local} and \cite{peterfalvi2000character} give
different definitions of type~I maximal subgroups. Condition~(Iii) in
\cite[Section~16]{bender1994local} requires that
\begin{quote}
each complement $E$ to $H$ in $M$ contains a normal abelian subgroup $A$ such that $C_E(x) \subset A$ for all $x \in H^{\#}$,
\end{quote}
whereas \cite[Definition~8.1]{peterfalvi2000character} requires only that
\begin{quote}
    there is a complement $U$ of $H$ containing a normal abelian subgroup $U_1$ such that $C_U(x) \subset U_1$ for all $x\in H^{\#}$.
\end{quote}
Thus, the definition in \cite{bender1994local} is formally stronger than
that in \cite{peterfalvi2000character}: the former quantifies over all
complements, whereas the latter requires only the existence of one such
complement. Nevertheless, the two conditions are equivalent by the
Schur--Zassenhaus theorem, since complements to \(H\) in \(M\) exist and
are conjugate.

This case was resolved automatically by the {\sc Prover} procedure and was identified through a retrospective examination of the execution logs. In particular, the {\sc Prover} procedure resolves this
incompatibility by establishing a lemma that bridges the two definitions.
By contrast, the existing Rocq formalization of the Odd Order
Theorem~\cite{gonthier2013machine} adopts the latter condition as its
definition of type~I maximal subgroups and formalizes the results from
\cite{bender1994local} using the formulation in
\cite{peterfalvi2000character}.

\paragraph{Typographical errors in the definitions of type~I maximal subgroups \textnormal{\textsc{[Def1]}}.}
Condition~(Iv)(c) in the definition of type~I maximal subgroups in \cite{bender1994local} states
\begin{quote}
    for every $p \in \pi(H)$, $p\in \pi^*$ and the exponent of $M/H$ divides $p-1$; for some such prime $p$, we have $\mathcal{O}_{p'}(M)$ is cyclic.
\end{quote}
While attempting to prove Proposition~16.1 of \cite{bender1994local} under
this definition, the autoformalization workflow stalled. After repeated
attempts, {\sc Prover} returned \(\mathsf{Failure}\) to {\sc FormaTheoria},
triggering an escalation for human investigation. Manual examination of the
proof of Proposition~16.1 revealed that the final condition should instead read:
\begin{quote}
    for every $p \in \pi(H)$, $p\in \pi^*$ and the exponent of $M/H$ divides $p-1$; for some such prime $p$, we have $\mathcal{O}_{p'}(H)$ is cyclic.
\end{quote}
Thus, \(\mathcal{O}_{p'}(M)\) in the stated definition appears to be a
typographical error for \(\mathcal{O}_{p'}(H)\). The corresponding definition
in \cite[Definition~8.3]{peterfalvi2000character} contains the same error.

\paragraph{Missing condition in Peterfalvi \textnormal{\textsc{[Def1]}}.}
The proof of Appendix~IV, Lemma~2(c) in
\cite{peterfalvi2000character} requires the additional condition that
\(\lvert Q_1\rvert\) is odd, but this condition is omitted from the lemma
statement. In Peterfalvi's application of the lemma, however,
\(\lvert Q_1\rvert\) is already known to be odd from the earlier decomposition
in \cite[Part~II, Chapter~1]{peterfalvi2000character}. The application can therefore be justified by supplying this hypothesis implicitly in that context. The condition must nevertheless be added to the lemma statement for the lemma itself to hold. This issue was resolved automatically by {\sc ReconcilerAgent} and identified through a
retrospective examination of the execution logs.

\paragraph{Error in Theorem 8.27 in Huppert I \textnormal{\textsc{[Def1]}}.}
The statement of Dickson's theorem in
\cite[Satz~II.8.27]{huppert1967endliche} includes \begin{quote}
semidirect products of elementary abelian groups of order $p^m$ with cyclic groups of order $t$, where $t \mid \frac{p^m-1}{d}$ and $t \mid p^f - 1$.
\end{quote}
The {\sc Prover} procedure found counterexamples to this statement and
escalated the issue for human investigation. Manual examination determined
that the correct statement is
\begin{quote}
semidirect products of elementary abelian groups of order $p^m$ with cyclic groups of order $t$, where $t \mid p^m-1$ and $t \mid \frac{p^f - 1}{d}$.
\end{quote}
Thus, the factor \(d\) appears in the wrong divisibility condition in the
stated theorem.

\paragraph{Error in Lemma~11 of Higman \textnormal{\textsc{[Def2]}}.}
In the proof of Lemma~11 in \cite[p.~88]{higman1963suzuki}, the basis
\(u_0,\dots,u_m\) should instead be \(u_0,\dots,u_{m-1}\). This indexing
error was corrected automatically by the {\sc Prover} procedure and identified through a
retrospective examination of the execution logs.

\paragraph{Inconsistent definitions of Suzuki 2-groups \textnormal{\textsc{[Mal1]}}.}
In the definition of Suzuki 2-groups, Peterfalvi \cite{peterfalvi2000character} requires the action to be regular:
\begin{quote}
    A Suzuki 2-group is a 2-group $P$ such that $P$ is non-abelian, $P$ has at least two involutions and there is a cyclic group $K$ which acts faithfully on $P$ and regularly on the set of involutions of $P$.
\end{quote}
In contrast, Higman \cite{higman1963suzuki} only assumes that the action is transitive. The {\sc Prover} automatically derives transitivity from regularity and therefore can invoke the relevant results in \cite{higman1963suzuki}. Note that Higman later proved the non-trivial result that transitivity in this setting actually implies regularity; however, this theorem has not been formalized by the {\sc Prover}. This issue was identified through a retrospective examination of the execution logs.

\subsection{Examples of Automatically Fixed Translation Mistakes}
\label{sec:examples-translation-mistakes}

The initial translation of a source item may fail to preserve its intended mathematical content, particularly when the source relies on implicit context or when hypotheses and conclusions are incorrectly organized in the formal statement. Such errors may be exposed either by a failed proof attempt or by the failure of a proved theorem to support a downstream argument. In these cases, the {\sc ReconcilerAgent} compares the translation with the relevant source context and downstream uses, identifies the discrepancy, and repairs the formal statement. The following examples illustrate two translation mistakes detected and corrected through this process.

\paragraph{Implicit source context omitted in translation
\textnormal{\textsc{[InT1]}}.}
Part~I of \cite{peterfalvi2000character} is written in a style that leaves many standing assumptions implicit. For example, (13.18) states:
\begin{quote}
    Let $j$ be such that $0 < j < p$ and let $\beta_j = \operatorname{Ind}_{PW_1}^S 1_{P W_1} - \mu_{0j}$.
    (a) $\mathrm{supp}(\beta_j) \subset P^{\#} \cup (W - (W_1\cup W_2))^S \subset A_0(S)$.
    (b) $\|\beta_j\|^2 = \frac{u-1}{q} + 2$.
    (c) $\Gamma = \beta_j^{\tau} - 1_G + \eta_{0j}$ is independent of $j$, orthogonal to $1_G$ and real.
    (d) Set $\Gamma = X+Y$ where $X$ is a linear combination of the functions $\eta_{ik}$ and $Y$ is orthogonal to the functions $\eta_{ik}$. Then $\|Y\|^2 \le \frac{u-1}{q}$.
\end{quote}
Here, $G$ is a chosen minimal counterexample to the Odd Order Theorem, and $S\leq G$ is a particular subgroup selected in (13.1) on Page~75, nine pages before (13.18) on Page~84. Similarly, $\mu$ is a class function introduced in (13.1), which in turn refers back to (4.3) for its construction and relevant properties. The initial translation of (13.18) omitted the context associated with these particular choices of $S$ and $\mu$. Consequently, the {\sc Prover} could not establish the translated statement because the formal context lacked properties of $S$ and $\mu$ required by the proof. The {\sc ReconcilerAgent} repaired the translation by comparing (13.18) with (13.1) and making the required context explicit.

\paragraph{Translation weakened theorem statement
 \textnormal{\textsc{[InT2]}}.}
The initial translation of Peterfalvi's (12.6)~\cite{peterfalvi2000character} was incorrect: the conclusions of \texttt{theorem\_12\_6\_statement} were mistakenly included in \\ \texttt{theorem\_12\_6\_source\_data}, which appears as an assumption of \texttt{theorem\_12\_6\_statement}. The resulting theorem therefore admitted a trivial proof but was unusable in downstream arguments, such as Peterfalvi's (12.16) and (14.2), because constructing \texttt{theorem\_12\_6\_source\_data} already required establishing the intended conclusions. Accordingly, the {\sc Prover} initially proved \texttt{theorem\_12\_6} trivially but later failed to apply it while proving Peterfalvi's (12.16). This failure triggered the {\sc ReconcilerAgent}, which corrected the statement of \texttt{theorem\_12\_6} by comparing it with its use in (12.16).

\subsection{Reconciliation Hotspots and Repair Boundaries}
\label{sec:analysis-reconciliation}

Reconciliation addresses theory-integration failures arising from cross-source misalignment (Challenge \textsc{C2}), incorrect Lean translation (Challenge~\textsc{C3}), and source defects (Challenge~\textsc{C4}). Such failures typically surface as an unprovable translation or downstream application conflict of type~\textnormal{\textsc{[InT1]}} or~\textnormal{\textsc{[InT2]}}, although diagnosis may reveal an underlying~\textnormal{\textsc{[Mal1]}}, \textnormal{\textsc{[Mal2]}}, \textnormal{\textsc{[Def1]}}, or~\textnormal{\textsc{[Def2]}} issue, as illustrated in Section~\ref{sec:issues}. Table~\ref{tab:reconciler-activity} reports the exact modification counts and declaration populations, while Figure~\ref{fig:reconciler-modification} normalizes these counts by source-group size and orders the groups by modification share. The normalized view distinguishes absolute modification volume from the relative extent of the affected code surface. Because one reconciliation may modify several declarations, whereas an unresolved case may be escalated without modifying any, these measurements describe the affected code surface rather than agent invocations, distinct incompatibilities, successful repairs, or error rates.

\begin{table}[p]
  \centering
  \scriptsize
  \setlength{\tabcolsep}{3pt}
  \renewcommand{\arraystretch}{0.98}
  \resizebox{\linewidth}{!}{%
  \begin{tabular}{lrr@{\qquad}|@{\qquad}lrr}
    \toprule
    Source Group            & \makecell{Declarations\\Modified} & \makecell{Total\\Declarations} &
    Source Group            & \makecell{Declarations\\Modified} & \makecell{Total\\Declarations} \\
    \midrule
    BG~Section~5            & 4                     & 127                & PF~Part I~Section~14    & 21                    & 722                \\
    BG~Section~6            & 1                     & 114                & PF~Part II~Chapter~1.1  & 2                     & 209                \\
    BG~Section~13           & 2                     & 259                & PF~Part II~Chapter~1.2  & 1                     & 86                 \\
    BG~Section~16           & 1                     & 356                & PF~Part II~Chapter~1.3  & 1                     & 130                \\
    PF~Part I~Section~6     & 1                     & 711                & PF~Part II~Chapter~2    & 1                     & 434                \\
    PF~Part I~Section~7     & 4                     & 378                & PF~Part II~Chapter~4.2  & 2                     & 64                 \\
    PF~Part I~Section~8     & 8                     & 537                & PF~Part II~Appendix~III & 1                     & 47                 \\
    PF~Part I~Section~9     & 27                    & 1,425              & PF~Part II~Appendix~IV  & 3                     & 200                \\
    PF~Part I~Section~10    & 82                    & 914                & Huppert~Chapter~II      & 2                     & 120                \\
    PF~Part I~Section~11    & 1                     & 467                & Huppert~Chapter~IV      & 1                     & 268                \\
    PF~Part I~Section~12    & 1                     & 424                & Huppert~Chapter~XI      & 5                     & 697                \\
    PF~Part I~Section~13    & 110                   & 1,294              & Higman                  & 2                     & 434                \\
    \midrule
    \multicolumn{3}{r}{} & Total                   & 284                   & 10,417             \\
    \bottomrule
  \end{tabular}}
  \caption{Declarations modified by {\sc ReconcilerAgent} relative to the total declaration count in each source group.}
  \label{tab:reconciler-activity}
  \vspace{0.5em}
  \includegraphics[width=\linewidth]{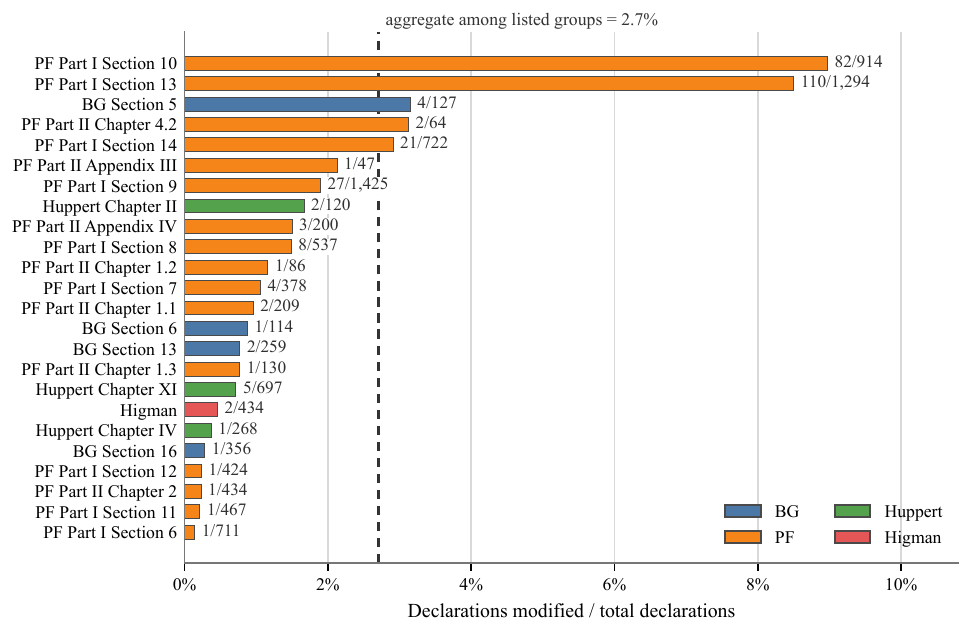}
  \captionof{figure}{Source groups ordered by the share of declarations modified by {\sc ReconcilerAgent}. Bar lengths show the number of modified declarations divided by the total declaration count in the corresponding source group; labels report the underlying counts, and colors distinguish source families. The dashed line marks the aggregate modification share across the source groups listed in Table~\ref{tab:reconciler-activity}.}
  \label{fig:reconciler-modification}
\end{table}

\paragraph{Where reconciliation concentrates.}  Table~\ref{tab:reconciler-activity} reports the number of declarations modified during reconciliation together with the total declaration population of each audited source group, while Figure~\ref{fig:reconciler-modification} normalizes these counts by source-group size. These quantities measure the formal code surface affected by reconciliation. They do not directly measure the number of reconciliation invocations, distinct incompatibilities, or source errors, because a single incompatibility may require coordinated changes to several declarations.

The modifications are highly concentrated. Across the source groups represented in Table~\ref{tab:reconciler-activity}, 284 of 10,417 declarations were modified, giving an aggregate share of \(2.7\%\). PF Part~I accounts for 255 of these 284 modifications. Within PF Part~I, Sections~9, 10, and~13 alone account for 219 modifications, or \(77.1\%\) of all modifications in the table, although they contain only 3,633 of the 10,417 declarations. The concentration remains visible after normalization. Sections~10 and~13 have modification shares of \(8.97\%\) and \(8.50\%\), respectively, and together contain 192 modified declarations among 2,208 declarations, for a combined share of \(8.7\%\). More broadly, Sections~9--14 have a modification share of \(242/5{,}246=4.6\%\), compared with \(13/1{,}626=0.8\%\) for Sections~6--8. Thus, the later PF Part~I sections exhibit a modification share approximately \(5.8\) times as large.

Section size alone does not account for this pattern. For example, Section~9 contains 1,425 declarations, more than either Section~10 or Section~13, but its modification share is only \(1.89\%\). To explore whether the concentration is instead associated with cross-section integration, we analyze the declaration-dependency graph. Let \(V_s\) denote the declarations owned by Section~\(s\). Because an edge \(u\to v\) indicates that \(u\) depends on \(v\), define
\[
d_s^+(u)
= \lvert\{v : u\to v,\ \operatorname{owner}(v)\neq s\}\rvert,
\qquad
C_s
= \sum_{u\in V_s} d_s^+(u).
\]
Here, \(C_s\) counts the dependency edges from declarations in Section~\(s\) to declarations owned by other sections. It therefore measures the extent to which the formalization of Section~\(s\) relies on interfaces with previously formalized material.

Table~\ref{tab:reconciler-boundary-metrics} provides supporting evidence for an association between such cross-section dependencies and reconciliation concentration. Sections~10 and~13 have the two largest cross-section dependency counts, \(10{,}203\) and \(11{,}631\), as well as the two highest modification shares. Across the fourteen PF Part~I sections, the modification share has Pearson correlation \(r=0.861\) and Spearman correlation \(\rho=0.820\) with \(C_s\), both with \(p<0.001\). Its association with section size is weaker: the corresponding correlations are \(r=0.607\) (\(p=0.021\)) and \(\rho=0.483\) (\(p=0.080\)). 
These exploratory comparisons suggest that the observed concentration is more closely associated with cross-section integration than with the number of declarations alone.

\begin{table}[t]
  \centering
  \begin{tabular}{lrrr}
    \toprule
    PF Section & $|V_s|$ & $C_s$ & \makecell{{\sc ReconcilerAgent}\\modification share} \\
    \midrule
    1  &   737 &    106  &  0\%    \\
    2  &   329 &   163   &   0\%    \\
    3  &   682 &   1,715 &   0\%    \\
    4  &   432 &  1,767  &   0\%    \\
    5  &   619 &   2,967 &   0\%    \\
    6  &   711 &   4,070 &   0.14\% \\
    7  &   378 &   1,038 &  1.06\% \\
    8  &   537 &   4,026 &   1.49\% \\
    9  & 1,425 &   6,308 &   1.89\% \\
    10 &   914 &  10,203 &   8.97\% \\
    11 &   467 &  5,365  &   0.21\% \\
    12 &   424 &   3,942 &   0.24\% \\
    13 & 1,294 &  11,631 &   8.50\% \\
    14 &   722 &  8,442  &   2.91\% \\
    \bottomrule
  \end{tabular}
  \caption{Cross-section dependency counts for PF Part I sections.}
  \label{tab:reconciler-boundary-metrics}
\end{table}

Inspection of the corresponding Lean development further suggests a plausible structural explanation for this association. In PF Part~I Section~10, the file \texttt{PFsection10/Basic.lean} organizes source hypotheses and their carrier data through \texttt{section10FullFourSixSourceNotation},\\ \texttt{section10FourSixSupportedPackage}, and \texttt{section10SupportedFourSixData}. The first declaration records the source-notation layer, whereas the latter two package the supported assumptions and carrier data. The file \texttt{PFsection10/PFsection10\_11.lean} expands these packages into families of source-data and supported-data declarations, together with bridge lemmas that project fields from bundled hypotheses or transport existing results into the supported representation. For example, the odd-order, vanishing, and parity fields of Hypothesis~(10.4) are extracted through separate theorem interfaces whose proof terms directly select the corresponding fields. PF Part~I Section~13 exhibits the same pattern across a larger module boundary: its 21 source files combine material from Sections~3, 5, 6, 8, 9, 10, and~11. In \texttt{PFsection13/PFsection13\_2.lean}, separate source, rank-choice, and branch-data predicates are connected by conversion lemmas, while \texttt{PFsection13/PFsection13\_Common.lean} stages the numerical argument of Theorem~(13.10) through raw-source, total-norm, cardinal-formula, and strict-comparison data, with bridge lemmas between successive representations. These intermediate declarations generally expose source contexts or transport existing results rather than express independent mathematical conclusions. Consequently, a mismatch at one representation boundary may require coordinated revisions to several packages, projections, and transport lemmas. Thus, integration across formal interfaces may expand the reconciliation footprint of a single conceptual incompatibility, requiring coordinated modifications to multiple packages, projections, conversion lemmas, and transport lemmas.

\paragraph{Why local reconciliation can succeed.} Many conflicts are repairable because the two sides already contain enough source evidence to justify a mathematical bridge. For a type~\textnormal{\textsc{[Mal1]}} mismatch, the definitions may be mathematically equivalent even though Lean does not identify them definitionally; an equivalence lemma or conversion construction can make the identification explicit. For a conflict manifesting as~\textnormal{\textsc{[InT2]}}, the current candidate can often be revised to use the approved dependency through an adapter, without changing either source-level statement. The package and projection layers in PF Sections~10 and~13 make these boundaries verbose, but they also localize the required transports and provide named intermediate invariants against which a repair can be checked.

The approval rule is central to this capability. {\sc ReconcilerAgent} may modify the current candidate, add justified compatibility declarations, replace the proof term of an approved theorem while preserving its type, or revise explicitly authorized theorem statements. It may not silently weaken an un-authorized approved hypothesis, strengthen an un-authorized approved conclusion, or identify two objects merely because doing so makes elaboration succeed. The reconciliation is accepted only if the sources support the bridge and all existing downstream declarations continue to elaborate. In this sense, reconciliation is constrained theory integration rather than unrestricted refactoring.

Section~\ref{sec:issues} provides concrete examples of this boundary. The two formulations of type~I maximal subgroups constitute a type~\textnormal{\textsc{[Mal1]}} misalignment: a source-supported equivalence bridge is possible because Schur--Zassenhaus supplies the missing mathematical identification. The omitted odd-order condition in Peterfalvi is a type~\textnormal{\textsc{[Def1]}} defect that {\sc ReconcilerAgent} could repair because the required hypothesis was already established in the surrounding source context. Likewise, the regular formulation of Suzuki 2-groups is sufficient for using a result stated under transitivity, because the needed implication has a supported direction. These cases are repairable not because reconciliation tolerates inconsistency, but because the available sources establish a local path from the stronger or differently packaged context to the interface required downstream.

\paragraph{Limits and escalation boundary.} Reconciliation cannot manufacture mathematical evidence that is absent from the sources. A type~\textnormal{\textsc{[Mal2]}} conflict may involve genuinely distinct objects sharing a name; identifying them would be unsound, while separating them may require a nonlocal redesign of the formal theory. A type~\textnormal{\textsc{[Def1]}} or~\textnormal{\textsc{[Def2]}} defect may require deciding what the author intended, supplying a missing theorem, or making unauthorized correction to an approved theorem statement. Similarly, an~\textnormal{\textsc{[InT1]}} error in the type of an already approved declaration cannot be repaired merely by replacing its proof term. These cases exceed the local authority of {\sc ReconcilerAgent} and must be escalated or returned to an upstream translation and approval step.

The examples in Section~\ref{sec:issues} again show the distinction. The misplaced divisibility factor in Huppert's statement required human examination, whereas the indexing error in Higman's proof could be corrected during proof construction without changing the theorem statement. For Suzuki 2-groups, the workflow uses the supported implication from regularity to transitivity, but it does not claim the unformalized converse. The operative criterion is therefore not whether a conflict is small syntactically, but whether a local repair is justified by available evidence, respects approved interfaces, and preserves every downstream use.

Finally, the aggregate does not measure the success rate or cost of reconciliation. A high modification share may reflect one broad interface repair rather than many independent failures, and a difficult unsuccessful reconciliation may leave no modified declaration at all. This distinction is operationally important because reconciliation is globally exclusive in the parallel workflow of Section~\ref{sec:batch-parallelization}; repeated calls at a hotspot may pause otherwise independent workers, but declaration counts cannot quantify that delay. Establishing operational effectiveness would require call-level traces recording the diagnosed issue type, attempted repair, review outcome, human escalation, downstream declarations rechecked, and time spent at the reconciliation barrier. The present data instead support a structural conclusion: reconciliation work is concentrated where source boundaries are exposed as formal interfaces, and automatic repair is effective only when those interfaces can be connected by a local, source-supported construction.

\subsection{Review as an Iterative Quality Gate}
\label{sec:analysis-review-reconcile}

\begin{table}[t!]
  \centering
  \small
  \begin{tabular}{lc@{\qquad\qquad}lc}
    \toprule
    Section & Review Rejects & Section & Review Rejects \\
    \midrule
    PF Sec.~1 & 3 & PF Sec.~8  & 1 \\
    PF Sec.~2 & 0 & PF Sec.~9  & 1 \\
    PF Sec.~3 & 1 & PF Sec.~10 & 0 \\
    PF Sec.~4 & 2 & PF Sec.~11 & 2 \\
    PF Sec.~5 & 2 & PF Sec.~12 & 1 \\
    PF Sec.~6 & 1 & PF Sec.~13 & 1 \\
    PF Sec.~7 & 0 & PF Sec.~14 & 1 \\
    \midrule
    \multicolumn{3}{r}{Total} & 16 \\
    \bottomrule
  \end{tabular}
  \caption{Rejected review outcomes for translation work in PF Part~I.}
  \label{tab:review-count}
\end{table}

Each section-level translation was reviewed repeatedly until it was accepted. Consequently, a section with \(r\) rejected outcomes required \(r+1\) review rounds. As shown in Table~\ref{tab:review-count}, the fourteen PF Part~I sections generated 16 rejected outcomes and therefore required 30 review rounds in total. Only three sections were accepted on the first review; the other eleven, or \(78.6\%\), were returned for revision at least once. Seven sections were accepted in the second round, three in the third, and one in the fourth.

These results provide direct operational evidence that the review agent affected the translation process. For most sections, it did not simply confirm the first candidate: it identified sufficient problems to prevent that candidate from advancing past the review stage and triggered another translation attempt. At the same time, the varying number of rounds shows that review did not impose a fixed rejection cycle. Some translations passed immediately, whereas others required up to three revisions. All sections were eventually accepted within four rounds, suggesting that the feedback was actionable enough for the translation process to converge rather than repeatedly producing unresolved candidates.

The data establish the usefulness of review as an active quality gate, but not its semantic accuracy in isolation. A rejected outcome may contain one or several findings, and repeated rejections may concern the same underlying issue. Moreover, eventual acceptance does not by itself prove that every remaining translation choice is correct. A stronger evaluation would require issue-level review records and an independent assessment of accepted candidates. The present evidence nevertheless shows that the review agent made consequential decisions: it intervened in most section-level translations and converted a single-pass generation process into an iterative generate--check--revise workflow.

\subsection{Comparison with Existing Rocq Formalization}
\label{sec:analysis-rocq}

A machine-checked proof of the Odd Order theorem, constructed in the Coq Proof Assistant using the Mathematical Components library, was announced in September 2012 by Gonthier and collaborators \cite{gonthier2013machine}. The Coq project was renamed The Rocq Prover in 2025 \cite{rocq2025release}; we use the current name below. At repository level, our Lean formalization contains substantially more lines of code than the Rocq formalization. This is a descriptive difference in code volume. It does not, by itself, measure development effort, mathematical difficulty, or proof quality. The comparison here is deliberately restricted to the Odd Order development. At the source level, that proof is organized around BG and PF Part~I; PF Part~II supports the Bender--Suzuki development and is not used by the Rocq Odd Order development. Accordingly, Table~\ref{tab:rocq} compares the shared BG and PF Part~I section breakdown, while the root-closure totals below additionally include every project declaration---that is, every named theorem or definition---needed by the root theorem, including declarations outside that breakdown. Neither quantity represents all Lean code in our repository. We therefore compare the two developments along three dimensions: their organization and proof routes over the shared sources, their code volume, and their checks on the final theorem and its logical assumptions.

\paragraph{Shared source structure, different proof routes.}
Both developments replace the original 255-page proof of Feit and Thompson \cite{feit1963solvability} with the same later two-volume route: the BG volume on local analysis \cite{bender1994local} and PF Part~I on character theory \cite{peterfalvi2000character}. Both also preserve the section structure of those sources. The Rocq development consists of the files \texttt{BGsection1.v} to \texttt{BGsection16.v}, then \texttt{BGappendixAB.v} and \texttt{BGappendixC.v}, then \texttt{PFsection1.v} to \texttt{PFsection14.v}. Our directories carry the same names, with no counterpart to \texttt{BGappendixAB.v}. A section-by-section comparison is therefore possible, and we report one in Table~\ref{tab:rocq}. The unusually close alignment of source boundaries makes this comparison more informative than a repository-wide line count, while the remaining differences in libraries and proof style still preclude a direct effort comparison.

The shared decomposition comes from the shared sources, but the developments diverge in four aspects:

\begin{enumerate}
\item \textbf{Proof route through BG's appendices.} For the principal proof-route divergence, our development departs from BG. BG's Appendix~B proves Puig's theorem, presenting it as the analogue of \cite[Theorem~8.2.11]{gorenstein1968finite} on $Z(J(S))$, and the Rocq development follows this route: it formalizes BG's Appendices~A and~B in \texttt{BGappendixAB.v} and uses the Puig subgroup in four further files. We instead prove Glauberman's theorem on $Z(J(S))$ itself, following \cite[Chapter~8]{gorenstein1968finite}. No subgroup of Puig's kind occurs under any name in our formalization: every subgroup we define as a supremum of a family of subgroups is a $p$-core, a $\pi$-core, the Fitting subgroup, or Thompson's $J$.
\item \textbf{Use of analysis.} The Rocq proof is free of analysis by construction and uses only the field \texttt{algC} of algebraic numbers containing character values. Our development, by contrast, uses both the complex numbers of Mathlib and real analysis.
\item \textbf{Division into declarations.} The declaration counts are likewise not in section-by-section correspondence: our development has fewer declarations than the Rocq development in five sections, but eighty times as many in PF Part~I Section~6. Such a spread is consistent with an autoformalized development in which many additional declarations are helper lemmas and their number depends on the decomposition found by {\sc RefineAgent}, rather than directly on the source structure.
\item \textbf{Type-I maximal subgroups.} The developments resolve the discrepancy between the definitions of type~I maximal subgroups discussed in Section~\ref{sec:issues} in opposite ways. The Rocq formalization adopts the existential formulation in \cite[Definition~8.3]{peterfalvi2000character}, whereas ours adopts the universal formulation in \cite[Section~16]{bender1994local}. Because the two formulations are equivalent, this representational choice does not affect the downstream mathematics.
\end{enumerate}

\begin{table}[t!]
  \centering
  \scriptsize
  \setlength{\tabcolsep}{6pt}
  \renewcommand{\arraystretch}{1.08}
  \resizebox{0.95\textwidth}{!}{%
  \begin{tabular}{c>{\hspace{-4pt}}r<{\hspace{4pt}}>{\hspace{-4pt}}r<{\hspace{4pt}}r@{\qquad}|@{\qquad}c>{\hspace{-4pt}}r<{\hspace{4pt}}>{\hspace{-4pt}}r<{\hspace{4pt}}r}
    \toprule
    \makecell{BG\\Section} & \multicolumn{1}{c}{\makecell{Lean\\Lines of Code}} & \multicolumn{1}{c}{\makecell{Rocq\\Lines of Code}} & Ratio &
    \makecell{PF Part I\\Section} & \multicolumn{1}{c}{\makecell{Lean\\Lines of Code}} & \multicolumn{1}{c}{\makecell{Rocq\\Lines of Code}} & Ratio \\
    \midrule
    1          & 4,009   & 1,334  & 3.0   & 1          & 12,236  & 755     & 16.2  \\
    2          & 174     & 1,154  & 0.2   & 2          & 7,639   & 830     & 9.2   \\
    3          & 24,842  & 1,829  & 13.6  & 3          & 19,287  & 1,874   & 10.3  \\
    4          & 14,140  & 1,411  & 10.0  & 4          & 11,882  & 995     & 11.9  \\
    5          & 6,838   & 531    & 12.9  & 5          & 19,119  & 1,611   & 11.9  \\
    6          & 6,605   & 317    & 20.8  & 6          & 25,513  & 1,347   & 18.9  \\
    7          & 5,287   & 972    & 5.4   & 7          & 6,648   & 835     & 8.0   \\
    8          & 7,176   & 398    & 18.0  & 8          & 19,291  & 1,133   & 17.0  \\
    9          & 6,425   & 470    & 13.7  & 9          & 73,314  & 2,230   & 32.9  \\
    10         & 16,858  & 1,500  & 11.2  & 10         & 42,717  & 1,225   & 34.9  \\
    11         & 5,005   & 440    & 11.4  & 11         & 24,995  & 1,200   & 20.8  \\
    12         & 30,374  & 2,680  & 11.3  & 12         & 19,884  & 1,363   & 14.6  \\
    13         & 11,952  & 1,119  & 10.7  & 13         & 54,809  & 2,190   & 25.0  \\
    14         & 20,985  & 2,511  & 8.4   & 14         & 32,388  & 1,261   & 25.7  \\
    15         & 23,516  & 1,508  & 15.6  &            &          &         &       \\
    16         & 11,074  & 1,362  & 8.1   &            &          &         &       \\
    Appendix~C & 8,068   & 778    & 10.4  &            &          &         &       \\
    \midrule
    Total      & 203,328 & 20,314 & 10.0  & Total      & 369,722 & 18,849 & 19.6  \\
    \bottomrule
  \end{tabular}
  }
  \caption{Lines of code by source section in the Lean and Rocq formalizations.}
  \label{tab:rocq}
\end{table}

\paragraph{Lean code volume is substantially larger, but line counts do not measure effort.}
The repository-level difference is large. The Rocq development is 34 files and 40,549 lines, whereas the root-specific part of our Lean repository that proves the same theorem is 537 files and 626,021 lines, a ratio of about fifteen to one. Within the shared BG and PF Part~I section breakdown, Table~\ref{tab:rocq} records 573,050 Lean lines and 39,163 Rocq lines. These figures describe final code volume, but they are not a normalized comparison of the work or mathematical content represented by each line. Four factors limit their interpretation.

\begin{enumerate}
\item \textbf{The base libraries differ.} A base library supplies reusable definitions and proved theorems on which a project can build. The Rocq development uses Mathematical Components, which contains extensive finite-group theory developed in large part for the Odd Order proof. The accounting boundary therefore differs between the projects: material counted as project code on one side may already be library code on the other. Subtracting the shared BG and PF Part I section totals from the corresponding development totals leaves 52,971 Lean lines and 1,386 Rocq lines outside the shared section breakdown; the Rocq remainder comprises Appendices~A and~B, the Wielandt fixed-point file, and the stripped statement. A normalized comparison would also count the relevant group-theoretic portions of both base libraries. We do not have those measurements.
\item \textbf{The proof languages differ.} Lean and Rocq use different formal languages and proof styles. In particular, \textsc{ssreflect}, the proof language used throughout Mathematical Components, is designed for concise proof scripts. A line of Lean and a line of Rocq are therefore not equivalent units of expression or work.
\item \textbf{Generated proofs may have a different internal structure.} Automatically generated Lean proofs may introduce many local helper lemmas or miss refactorings that a human-maintained project would make. The present counts cannot distinguish necessary formal infrastructure from code that later refactoring might reduce.
\item \textbf{Lines of code measure volume, not quality or effort.} Line counts record the amount and verbosity of the final code. They do not directly measure correctness, mathematical difficulty, development time, or project quality.
\end{enumerate}

Within these limits, the data support three descriptive conclusions. First, the Lean code volume for the Odd Order theorem is substantially larger. Second, the difference is distributed across the source: Lean has more lines in 16 of the 17 BG rows and all 14 PF Part~I rows, so the aggregate is not produced by one exceptional section. Third, the section-level ratios vary from $0.2$ to $34.9$, with book-level ratios of $10.0$ for BG and $19.6$ for PF Part~I. There is therefore no stable line-count conversion factor between the developments, and source length alone does not determine formalization volume, consistent with Section~\ref{sec:analysis-c1}. These measurements do not show that Lean or automatic formalization required fifteen times as much labor, that the Lean proof was mathematically harder, or that the generated proof is of lower quality than the Rocq proof.

\paragraph{Both developments check the root theorem and its logical assumptions.}
A root-theorem check asks two basic questions: (i)~\emph{whether the formal statement expresses the theorem that the project claims to prove}; and (ii)~\emph{whether its proof uses only the intended logical foundations}. Both developments address these questions, although they do so differently.

The Rocq development includes \texttt{stripped\_odd\_order\_theorem.v}, which gives a minimal, self-contained formulation of the Odd Order theorem using bare Rocq logic, without notation, coercions, or implicit arguments. The file does not merely display this formulation: it proves the stripped theorem by connecting it to \texttt{PFsection14.Feit\_Thompson}, with the resulting proof checked by the Rocq kernel. The Rocq formalization also avoids additional postulated axioms.

Our development uses \texttt{comparator} for a corresponding root-level check. It takes a statement written independently of the development, using only Mathlib constants, and verifies that the theorem proved here has the same type. It then replays the proof through the kernel and checks that its dependency closure contains no axioms beyond the three permitted by Lean's logic, namely \texttt{propext}, \texttt{Classical.choice} and \texttt{Quot.sound}. Thus, both developments guard against two important failure modes: unintentionally proving a different theorem and proving the intended theorem using unexpected assumptions.

These root-level checks do not assess the source-level justification of intermediate declarations. Translation review and reconciliation provide a separate safeguard against mistranslated statements and unsupported cross-source connections, without requiring the Lean development to reproduce the books' organization or proof routes.

\section{Ablation Studies}
\label{sec:ablation}

We conduct ablation studies of three workflow components: dependency-aware parallelism, section-level context sharing, and review-guided translation. For each component, we compare alternative configurations on a fixed target and report their effects on wall time, token consumption, or translation accuracy, as applicable. Unless stated otherwise, the experiments run on a Dell PowerEdge R750xa with Ubuntu~22.04.1, use \texttt{gpt-5.6-sol} with the reasoning-effort setting \texttt{max}, and execute through the Pi agent harness\footnote{\url{https://github.com/earendil-works/pi}} with the four tools described in Section~\ref{sec:shared-agent-runtime}. Network access is disabled, and the agents can access only the supplied Lean repository. Reported wall time includes Lean compilation: individual files can require several minutes to build, a full rebuild takes more than 30 minutes, and the agents invoke the compiler repeatedly while constructing proofs.

\subsection{Parallel versus Sequential Proof Construction}
\label{sec:ablation-parallel}
We compare the dependency-aware parallel execution of Section~\ref{sec:batch-parallelization} with a sequential schedule on the proof of Glauberman's $Z^*$ theorem. Since network access is blocked during execution, both conditions are supplied in advance with correct translations of all definitions and theorem statements; the comparison therefore isolates proof construction rather than source retrieval or statement translation. The parallel condition permits at most ten concurrent {\sc Prover} instances; individual instances run for between 0.13 and 7.6 hours. Section-level context sharing is enabled in this experiment.

\begin{table}[H]
  \centering
  \small
  \begin{tabular}{lrrr}
    \toprule
    Execution Mode & Wall Time & Total Tokens & Output Tokens \\
    \midrule
    Sequential                         & 51.4 h & 2,085,361,637 & 7,645,079 \\ 
    Parallel                           & 12.3 h & 3,038,587,634 & 9,077,309 \\ 
    Relative change (parallel vs. sequential) & $-76.1\%$ & $+45.7\%$ & $+18.7\%$ \\
    \bottomrule
  \end{tabular}
  \caption{Parallel and sequential proof construction for Glauberman's $Z^*$ theorem. Wall time includes Lean compilation.}
  \label{tab:parallel-sequential-ablation}
\end{table}

Table~\ref{tab:parallel-sequential-ablation} exhibits a direct time--compute tradeoff. Parallel execution reduces wall time from 51.4 to 12.3 hours, a 76.1\% reduction or a $4.2\times$ speedup, while increasing total token consumption by 45.7\%. Output-token consumption increases more modestly, from 7,645,079 to 9,077,309, or 18.7\%. Aggregate total-token consumption per wall-clock hour consequently rises from approximately 40.6 million to 247.4 million, reflecting the simultaneous activity of multiple {\sc Prover} instances. Parallelization is therefore effective at converting additional inference capacity into shorter elapsed time, but it is not a token-saving mechanism. The relevant benefit is critical-path compression: independent proof obligations advance concurrently, so overall completion is governed more by the slowest active branches than by the sum of all branch durations.

\subsection{Effect of Section-Level Context Sharing}
\label{sec:ablation-context-sharing}

We evaluate the section-level reuse mechanism of Section~\ref{sec:context-sharing} on PF Part~II, Chapter~4, Section~3. In the sharing condition, a single {\sc Prover} trajectory processes the theorems in sequence: after a theorem is proved, the agent session is not terminated, and the instruction for the next theorem is appended to the accumulated model context. Thus, the interaction history records source-specific notation and assumptions, earlier proof attempts, relevant local declarations, and discovered Mathlib interfaces. This model context is distinct from the Lean proof context and persists across theorem boundaries. In the no-sharing condition, each theorem instead begins in a fresh session, for 21 sessions in total. 
In both conditions, the theorems are proved sequentially without dependency-aware batch parallelization.

\begin{table}[H]
  \centering
  \small
  \begin{tabular}{lrrr}
    \toprule
    Condition & Wall Time & Total Tokens & Output Tokens \\
    \midrule
    No sharing            & 23.6 h & 520,807,040 & 923,108 \\
    Section-level sharing & 17.3 h & 426,810,415 & 730,462 \\
    Relative reduction    & 26.7\% & 18.0\% & 20.9\% \\
    \bottomrule
  \end{tabular}
  \caption{Context-sharing ablation on PF Part~II, Chapter~4, Section~3. Wall time includes Lean compilation.}
  \label{tab:context-sharing-ablation}
\end{table}

As shown in Table~\ref{tab:context-sharing-ablation}, section-level sharing reduces total token consumption by 18.0\%, output tokens by 20.9\%, and wall time by 26.5\%. In the no-sharing condition, the mathematical and formal environment of the section must be reconstructed across 21 independent sessions. The shared trajectory amortizes this discovery cost: once notation, standing assumptions, local declarations, and usable library interfaces have been established, they remain available for subsequent theorems. The consistent reductions across all three measures show that section-level context sharing is effective for this target.

\subsection{Diagnostic Feedback in Translation Review}
\label{sec:ablation-translator-review}

{\sc TranslatorAgent} uses an LLM-based {\sc ReviewAgent} as a semantic judge because Lean elaboration verifies that a proposed declaration is well typed, but not that it faithfully expresses the source statement. The judge independently compares the source material with the candidate Lean declaration; when it rejects a candidate, its feedback is returned to {\sc TranslatorAgent} for another attempt.
We conduct a pilot ablation to distinguish the effect of this review step from the effect of the information contained in its feedback. Source items were randomly sampled from the literature corpus, translated under one of four conditions, and then judged manually for fidelity to the source:
\begin{enumerate}
  \item \textbf{No review}: {\sc TranslatorAgent} returns its initial translation without an LLM review step.
  \item \textbf{Binary decision}: {\sc ReviewAgent} returns only $\mathsf{Accept}$ or $\mathsf{Reject}$.
  \item \textbf{Decision with diagnostics}: {\sc ReviewAgent} returns the binary decision together with concrete findings for the next translation attempt.
  \item \textbf{Self-improved review rubrics}: {\sc ReviewAgent}'s checklist are updated by the self-improvement process.
\end{enumerate}

\begin{table}[H]
  \centering
  \small
  \begin{tabular}{lrr}
    \toprule
    Review Feedback & Manually Judged Correct   & Tokens \\
    \midrule
    No review                    & 6/10         & 22,545,626 \\
    Binary decision              & 7/10         & 36,060,479 \\
    Decision with diagnostics    & 9/10         & 64,415,452 \\
    Self-improved                & 10/10        & 59,764,652 \\
    \bottomrule
  \end{tabular}
  \caption{Pilot ablation of review feedback for {\sc TranslatorAgent}.}
  \label{tab:translator-review-ablation}
\end{table}

The ten problems are sampled randomly from the sources and fixed for the four runs.
Table~\ref{tab:translator-review-ablation} reports six correct translations among ten items without review, seven with a binary decision, nine when diagnostics accompany the decision, and ten under the self-improved review rubric. A binary rejection prevents an unsupported candidate from advancing, but provides no direction for revision. Diagnostics additionally generate an explanation of the rejection, e.g.\ which part of the translation drifts from the source. Relative to the fixed diagnostic condition, the self-improved rubric raises the observed count from nine to ten correct translations while reducing token consumption from 64,415,452 to 59,764,652, a 7.2\% reduction. Rather than merely adding another review pass, the self-improvement process converts failures observed in earlier attempts into reusable review criteria. In this sample, the simultaneous increase in observed translation accuracy and reduction in token consumption suggests that the resulting rubric made subsequent reviews more targeted. Together with the repeated review outcomes in Section~\ref{sec:analysis-review-reconcile}, these results show that translation review serves two distinct functions---candidate selection and source-grounded revision guidance---and that the quality of the rubric affects both. The learned rubrics and their model-generated examples are shown below.

\begin{PromptBlock}{The learned checklist}

\begin{enumerate}
  \item \textbf{Recover inherited context.} When a target occurs inside a long-running
   section, standing setup, or case analysis, verify all still-active
   hypotheses and previously fixed structural data.  Do not assume that only
   conditions repeated beside the target remain in force.

  \item \textbf{Preserve canonical data.} If the source refers to maps, actions,
   subgroups, quotient actions, representatives, decompositions, or other
   objects defined earlier, ensure that the Lean declaration uses those
   canonical objects or explicitly ties parameters to them.  Do not silently
   replace canonical data with arbitrary functions, actions, or instances.

  \emph{Example.} \textcolor{gray}{For quotient group \texttt{G / H}, ensure the \texttt{MulDistribMulAction} is induced by the action on \texttt{G}. Reuse the induced quotient-action APIs rather than inventing arbitrary instances.}

  \item \textbf{Preserve complete structured data.} When the source treats a
   decomposition, model, configuration, or reconstruction datum as one
   mathematical object, verify that every distinguished component relevant to
   its meaning is represented and preserved.


  \item \textbf{Do not strengthen for implementation convenience.} Reject unnecessary
   assumptions, typeclasses, stronger action formulas, extra finiteness
   assumptions, or proof-derived properties that are not part of the source
   statement.

  \emph{Example.} \textcolor{gray}{Conclusions of theorems should not appear in assumptions.}

  \item \textbf{Audit bundled interfaces for hidden strengthening.} When reusing an
   existing project predicate, structure, or bundled interface, check that its
   hypotheses and mathematical generality match the source.

\end{enumerate}

\end{PromptBlock}

\section{Conclusion}
\label{sec:conclusion}

Large-scale source-based formalization is not only a problem of proving statements that have already been expressed in a formal language. It requires the construction of the surrounding theory from distributed sources whose dependencies, conventions, and occasional defects become visible only as formalization proceeds. We introduced {\sc FormaTheoria} to support this process end to end. The workflow combines source retrieval, semantic translation, graph-based proof construction, on-demand dependency discovery, independent review, constrained reconciliation, and human escalation, while maintaining provenance and protecting approved declarations from unsupported downstream changes.

The completed formalizations of the Feit--Thompson, Glauberman $Z^*$, Brauer--Suzuki, and Bender--Suzuki theorems demonstrate the scale at which these mechanisms can operate. The analysis reveals two distinct forms of long-range structure. Source discovery leaves the boundary of the task open-ended, whereas the declaration graph contains a broad shallow base connected to the root theorems through a narrow but exceptionally deep backbone. The execution traces provide an operational counterpart: the largest trajectories span repeated context compactions and therefore require proof state and earlier decisions to survive well beyond a single model context. Together, these observations explain why persistent graph-based state is a central workflow mechanism rather than merely a representation of the final proof.

The results also show that quality control and theory integration cannot be reduced to kernel checking alone. Review rejected the first candidate for most PF Part~I sections and turned translation into an iterative generate--check--revise process. Reconciliation modifications were concentrated at a small number of representation-heavy interfaces rather than distributed in proportion to source-group size. The source cases in Section~\ref{sec:issues} further distinguish repairs supported by an explicit mathematical bridge from conflicts that require human judgment. These findings support a division of responsibility in which Lean certifies formal correctness, review checks source fidelity, reconciliation performs only source-justified local repairs, and unresolved semantic questions remain visible for escalation.

\paragraph{Future work.} The present study remains a case study of a single large-scale formalization in finite-group theory. Our primary direction for future work is to continue formalizing the remaining components of the CFSG. Possible next targets include the Gorenstein--Walter theorem~\cite{gorenstein1965characterization,gorenstein1965characterizationII,gorenstein1965characterizationIII}, the Alperin--Brauer--Gorenstein theorem~\cite{alperin1970finite}, followed by the Gorenstein--Harada theorem~\cite{gorenstein1974finite} and the Aschbacher--Smith work~\cite{aschbacher2004quasithin}, which comprises more than one thousand pages of technically demanding mathematics. The Gorenstein--Walter theorem is already posed as a LeanEval problem~\cite{leaneval}, so a formalization of it can be checked against a statement fixed by the benchmark rather than by us. Although the current workflow provides core capabilities for dependency discovery and reconciliation, scaling to these larger targets may require more robust automated handling of source defects and cross-source misalignments. It may also require extending the batch parallelism of Section~\ref{sec:batch-parallelization} to dynamic scheduling, accelerating the project by dynamically assigning independent branches, promptly propagating approved prerequisites, prioritizing critical-path tasks, and detecting interface conflicts during branch integration. 

Two broader directions also merit investigation. One is the development of models and workflows for improving code quality and reusability by refactoring Lean developments into stable, well-documented abstractions, reducing duplicated and ad hoc constructions, and aligning reusable components with Mathlib. Because AI-generated code can be difficult for human readers to follow, such efforts would need to emphasize clear naming, transparent proof structures, and explanatory documentation. Another direction is the construction of a formalization-backed knowledge base and interactive interfaces exposing the definitions, source provenance, dependencies, and proof structure of the CFSG. The present formal library and its dependency graph could provide the underlying verified content, but realizing this possibility would require semantic annotations, links between formal declarations and their mathematical explanations, and effective tools for search, visualization, and proof navigation. Such resources could help junior mathematicians navigate the extensive prerequisites of the CFSG, enable senior mathematicians to examine its global structure more effectively, and potentially reveal alternative proof paths, reusable intermediate results, or other opportunities for new mathematical discoveries.

\section*{Acknowledgments}

We are sincerely grateful to Richard Lyons and Ronald Solomon for their valuable discussions.

\bibliographystyle{plain-titlecase}
\bibliography{refs}

\clearpage
\appendix

\section{Formalized Theorems}
\label{app:theorems}

In this section we state the theorems formalized by the workflow.
The theorems are printed here in the form in which a reader can verify them directly against
the Lean code, which is a constraint on how a statement is written and not only on how it is
proved.  Three of the four need little defending on that score.
Theorems~\ref{thm:odd-order}, \ref{thm:zstar} and~\ref{thm:brauer-suzuki} are targets of
LeanEval~\cite{leaneval}: the statement of each is posed by the benchmark, fixed in a
file a participant is forbidden to modify, and settled before any proof is attempted. Theorem~\ref{thm:brauer-suzuki} is displayed below in a simplified but equivalent form, as explained there.  It was therefore not
written by the same hands that discharged it, and could not drift towards whatever turned
out to be provable.  Two of the three are moreover stated over Mathlib and nothing else:
Theorem~\ref{thm:odd-order} needs a finite group, the parity of its order, and
solvability, and this is also, verbatim, the endpoint of our own odd order library.  Only
Theorem~\ref{thm:brauer-suzuki} adds a notion of its own, the odd core --- the supremum
of the normal subgroups of odd order --- and that is one line long.

Theorem~\ref{thm:bender-suzuki} has neither advantage.  Its statement is
ours, and one side of it must name the groups being classified, which no amount of care
will reduce to Mathlib's current vocabulary.  Definition~\ref{def:strongly-embedded} and the
discussion after the theorem are what we do about that.

\begin{theorem}[Feit--Thompson]
\label{thm:odd-order}
Let $G$ be a finite group.
If $|G|$ is odd, then $G$ is solvable.
\end{theorem}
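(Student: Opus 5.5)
The plan is the standard Bender--Glauberman--Peterfalvi route, the same decomposition the paper's development follows. Argue by contradiction and take a minimal counterexample: a group $G$ of odd order that is not solvable and has least order among such groups. Every proper subgroup of $G$ and every proper quotient of $G$ has odd order and is smaller, so each of them is solvable. If $G$ had a nontrivial proper normal subgroup $N$, then $N$ and $G/N$ would both be solvable, and hence so would $G$. Therefore $G$ is a nonabelian simple group of odd order all of whose proper subgroups are solvable. This reduction uses only Mathlib-level facts about solvability under extensions.

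The local-analysis phase follows \cite{bender1994local}. For the maximal subgroups $M$ of $G$, one develops the Thompson transfer and factorization machinery, the uniqueness theorems for maximal subgroups, and Glauberman's theorem on $Z(J(S))$. As the paper notes, we use this theorem in place of Puig's theorem, following \cite[Chapter~8]{gorenstein1968finite}. The goal is the structure theorem of \cite[Section~16]{bender1994local}. It says that every maximal subgroup $M$ has a Frobenius-like structure $M = M_\sigma \rtimes E$ and is of one of types~I--V. It also says that $G$ contains maximal subgroups of type II--V, the non-Frobenius types, and that these come in a tightly constrained configuration. In particular, there are exactly two classes of such subgroups $S,T$, and their kernels and complements satisfy the numerical relations needed later. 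The type~I definition must be used with the corrected condition that $\mathcal{O}_{p'}(H)$ is cyclic, as discussed in Section~\ref{sec:issues}. The existential and universal formulations of type~I must be bridged via Schur--Zassenhaus.

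The character-theoretic phase follows \cite[Part~I]{peterfalvi2000character}. Using Dade isometries and coherence of the exceptional characters of the maximal subgroups, one transfers virtual characters from $S$ and $T$ to $G$. One then bounds norms, which is where the estimates of Sections~9--13 and statements such as (13.18) enter. The aim is to show that $S$ and $T$ must be of a specific type, Peterfalvi's type~II, with prescribed Frobenius kernels of orders $p^q$ and $q^p$ for primes $p,q$. Finally, Section~14 yields a counting contradiction. This is the final numerical argument, using the inequality $(p^q-1)/(p-1)$ versus $(q^p-1)/(q-1)$ together with the divisibility forced by the structure. It shows that no such $G$ exists.

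I expect the main obstacle to be the coherence and norm estimates of PF Sections~9--13. The dependency-depth data of Section~\ref{sec:analysis-c1} confirm this: the deepest and narrowest chains lie there. The source's running-context style hides standing hypotheses, so my first step would be to package each section's standing configuration as an explicit bundled hypothesis. I would then prove the numbered items against that bundle. This makes downstream applications (as with (12.6) and (13.18)) use exactly the intended context.
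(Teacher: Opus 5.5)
Your overall route is the paper's: the minimal-counterexample reduction, then BG's local analysis with Glauberman's $Z(J(S))$ theorem in place of Puig's, then the corrected type~I definition bridged by Schur--Zassenhaus, then PF Part~I. However, your account of how the argument closes has two genuine gaps.

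The first is a missing case. The structure theorem of BG Section~16, repackaged in PF Section~8, does not assert that maximal subgroups of types II--V exist. Its conclusion is a dichotomy: either every maximal subgroup of $G$ is of type~I, or there is a pair $S,T$ of the kind you describe. The local analysis does not exclude the all-type-I alternative. It has to be eliminated by character theory, through the analysis of type~I maximal subgroups in PF Section~12. Your character-theoretic phase discusses only $S$ and $T$, so as written it never disposes of this branch.

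The second gap is the endgame, which would not work as you state it. You propose to compare $(p^q-1)/(p-1)$ with $(q^p-1)/(q-1)$ and combine this with a divisibility between them. That amounts to the Feit--Thompson conjecture, namely that for distinct primes $p,q$ the first never divides the second, and this conjecture is still open. A size comparison cannot settle it, because for $q<p$ the second number is the larger. This is why the original proof needed the generators-and-relations argument of its Chapter~VI.

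The BG/PF route replaces that chapter by Peterfalvi's argument, proved in BG Appendix~C. It is a finite-field argument over $\mathbb{F}_{p^q}$ that rules out the configuration PF's analysis forces on $S$ and $T$. This appendix is a separately formalized component of the paper's development (the Appendix~C row of Tables~\ref{tab:section-cost} and~\ref{tab:rocq}). Your plan never mentions it, and without it, or some substitute, there is no valid final contradiction.
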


\begin{theorem}[Glauberman's $Z^*$ theorem]
\label{thm:zstar}
Let $G$ be a finite group with an isolated involution $t$ (i.e.\ the only conjugate of $t$ commuting with $t$ is itself).
Then there is a normal subgroup $N \trianglelefteq G$ of odd order such that $[g, t] \in N$ for all $g\in G$.
\end{theorem}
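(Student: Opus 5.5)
We prove Glauberman's $Z^*$ theorem by the classical route through modular character theory, following the exposition in \cite{gorenstein1968finite} and \cite{feit1982representation}. First we reduce to the statement that $t \in Z^*(G)$, where $Z^*(G)$ is the preimage of $Z(G/O(G))$ and $O(G)$ is the largest normal subgroup of odd order. Once this holds, taking $N=O(G)$ gives $[g,t]\in O(G)$ for all $g$, because $tO(G)$ is central in $G/O(G)$. We then argue by a minimal counterexample $G$. If $O(G)\neq 1$, we pass to $\bar G=G/O(G)$. Here the key point is that $\bar t$ is again isolated: if $\bar t^{\bar g}$ commutes with $\bar t$, then $\langle t, t^g\rangle O(G)/O(G)$ is a dihedral $2$-group image. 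Conjugating inside the odd-order kernel and using Sylow arguments in $\langle t,t^g\rangle O(G)$, we find a conjugate of $t$ commuting with $t$, which forces equality. Minimality then lets us assume $O(G)=1$, and the goal becomes $t\in Z(G)$.

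With $O(G)=1$, $t$ isolated and $t\notin Z(G)$, we take a Sylow $2$-subgroup $S$ containing $t$. A standard fusion argument (Alperin or Thompson transfer style) shows that isolation of $t$ in $G$ implies $t$ is isolated in $S$ and that $t^G\cap C_G(t)=\{t\}$. For any $x\in t^G$ with $x\neq t$, the element $tx$ then has odd order, so $t$ and $x$ generate a dihedral group with odd rotation. The heart of the proof is Glauberman's block-theoretic step. We work with the principal $2$-block $B_0$ and use Brauer's third main theorem, which says blocks of $C_G(t)$ inducing $B_0$ are principal. We then compare the generalized characters $\sum_{\chi\in B_0}\chi(1)\chi$ evaluated at $t$ and at $tx$. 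The orthogonality relations for $2$-blocks (``$\sum_{\chi\in B}\chi(u)\overline{\chi(v)}=0$ when $u,v$ have non-conjugate $2$-parts'') then show that $t$ and $tx$ lie in different $2$-sections. Combining this with a count of the class sizes via Brauer's second main theorem forces $t^G=\{t\}$, contradicting $t\notin Z(G)$.

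In order, the steps are:
\begin{enumerate}
\item the reduction to $Z^*$ and the passage to $O(G)=1$;
\item the dihedral lemma giving that $tx$ has odd order;
\item the formalization of Brauer's second and third main theorems together with block orthogonality;
\item the final character-sum computation.
\end{enumerate}
The main obstacle is step three. It needs modular representation theory (defect groups, the Brauer correspondence, and generalized decomposition numbers) that Mathlib largely lacks, so it would have to be built from the Feit and Isaacs sources, reconciled with the ordinary character theory infrastructure (\path{Section1.ClassFunction}) already developed for Peterfalvi. I would first try to state only the consequence actually used, the third main theorem for the principal block, and prove it via Brauer homomorphisms on central idempotents over a suitable $2$-modular system. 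This avoids the full block theory.
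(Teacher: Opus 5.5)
Your reductions are fine. Isolation passes to $G/O(G)$ by the Sylow argument you sketch; the image of $\langle t,t^g\rangle$ there is elementary abelian of order at most $4$. The dihedral lemma also holds: if $tx$ had even order $2m$, then $t$ would commute with $t(tx)^m\neq t$, which is conjugate to $t$ or to $x$. No Alperin or transfer argument is needed for the fusion remark, since $t^G\cap C_G(t)=\{t\}$ is the hypothesis, and $t\in Z(S)$ for every Sylow $2$-subgroup $S\ni t$ follows from a short normalizer argument.

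\textbf{The gap is step~4, which is where the entire theorem lies.} What must be shown is that every $\chi\in\mathrm{Irr}(B_0(G))$ satisfies $\chi(t)=\pm\chi(1)$. After that, Brauer's theorem $\bigcap_{\chi\in B_0(G)}\ker\chi=O_{2'}(G)$ gives $[G,t]\le O(G)$; this theorem is also missing from your list of ingredients. Your sketch never reaches this point:
\begin{itemize}
\item That $t$ and $tx$ lie in different $2$-sections is immediate because $tx$ has odd order; no orthogonality is needed.
\item Weak block orthogonality yields only identities such as $\sum_{\chi\in B_0}\chi(1)\chi(t)=0$ and $\sum_{\chi\in B_0}\chi(t)\overline{\chi(tx)}=0$. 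These hold in every group in which $tx$ has odd order, so they carry no information from isolation.
\item The phrase ``a count of the class sizes via Brauer's second main theorem'' does not describe an argument.
\end{itemize}

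To see concretely what is missing, take an arbitrary involution $t$ and put $C=C_G(t)$. The second and third main theorems, together with the orthogonality relations for generalized decomposition numbers, give $\sum_{\chi\in B_0(G)}\chi(t)^2=\sum_{\psi\in B_0(C)}\psi(1)^2$. Since $|\chi(t)|\le\chi(1)$, the statement $t\in Z^*(G)$ is therefore equivalent to the inequality $\sum_{\chi\in B_0(G)}\chi(1)^2\le\sum_{\psi\in B_0(C)}\psi(1)^2$. The block-theoretic machinery you list thus only reformulates the problem. The substance of Glauberman's proof is the step that converts isolation into this inequality, or directly into $\chi(t)=\pm\chi(1)$ on $B_0(G)$. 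Your proposal supplies no mechanism for that step.

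For the same reason, the economy suggested in step~3 is illusory. Step~4 as you describe it already requires the second main theorem and generalized decomposition numbers, not just the third main theorem for the principal block.
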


\begin{theorem}[Brauer--Suzuki]
\label{thm:brauer-suzuki}
Let $G$ be a finite group whose Sylow $2$-subgroups are generalized
quaternion groups of order $2^{n}$ with $n \geq 3$, and let $t$ be an involution of $G$.  Write
$O(G)$ for the odd core of $G$, the largest normal subgroup of odd order.  Then the image
of $t$ in $G/O(G)$ is central.
\end{theorem}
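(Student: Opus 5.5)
The most direct route is to deduce Theorem~\ref{thm:brauer-suzuki} from Glauberman's $Z^*$ theorem (Theorem~\ref{thm:zstar}). Set $\bar G = G/O(G)$. Since $O(G)$ has odd order, the Sylow $2$-subgroups of $\bar G$ are isomorphic to those of $G$, so they are again generalized quaternion of order $2^n$ with $n\ge 3$. Also $O(\bar G)=1$: the preimage in $G$ of a normal subgroup of odd order in $\bar G$ is a normal subgroup of odd order of $G$, so it lies in $O(G)$. The image $\bar t$ of $t$ is still an involution.

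The key step is to show that $\bar t$ is isolated in $\bar G$. A generalized quaternion group $Q$ has a unique involution, namely the generator $z$ of $Z(Q)$. Suppose $\bar t^{\bar g}$ commutes with $\bar t$. Then $\langle \bar t,\bar t^{\bar g}\rangle$ is a $2$-group, because two commuting involutions generate an elementary abelian group of order at most $4$. Hence it lies in some Sylow $2$-subgroup, which contains exactly one involution. Therefore $\bar t^{\bar g}=\bar t$, and the only conjugate of $\bar t$ commuting with $\bar t$ is $\bar t$ itself.

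Applying Theorem~\ref{thm:zstar} to $\bar G$ and $\bar t$ gives a normal subgroup $N\trianglelefteq \bar G$ of odd order with $[\bar g,\bar t]\in N$ for every $\bar g$. Since $O(\bar G)=1$, we get $N=1$. So $\bar t$ commutes with every element of $\bar G$, which means the image of $t$ in $G/O(G)$ is central, as claimed.

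Two steps will cost the most effort in Lean. The first is the bookkeeping that Sylow subgroups of $G/O(G)$ are images of Sylow subgroups of $G$, so that they remain generalized quaternion of the same order; this needs Mathlib's results on Sylow subgroups under quotients by odd-order normal subgroups. The second is the lemma that a generalized quaternion group has a unique involution, which follows from its presentation. I expect the main obstacle to be matching the formal definition of generalized quaternion used in the LeanEval statement (for instance via $\mathrm{QuaternionGroup}$ up to isomorphism) to the unique-involution property. The mathematics here is the $Z^*$ theorem, which is assumed.
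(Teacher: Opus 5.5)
Your argument is correct, and it is the standard derivation. An involution $t$ is isolated because any conjugate $t^g$ commuting with it generates with $t$ a $2$-group $\langle t,t^g\rangle$. That $2$-group lies in a Sylow $2$-subgroup with a unique involution, and Theorem~\ref{thm:zstar} then does the work.

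The paper writes out no mathematical proof of Theorem~\ref{thm:brauer-suzuki} to compare against. Its accompanying text only justifies the formulation. It argues that omitting \texttt{ht\_mem} costs nothing, since an involution lies in some conjugate of $P$ and the conclusion never mentions $P$. It also notes that centrality of $tO(G)$ is equivalent to $|Z(G/O(G))|=2$.

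Your route fits the paper's remark that the proof never uses \texttt{ht\_mem}. Your isolation step only needs \emph{whichever} Sylow $2$-subgroup contains $\langle t,t^g\rangle$ to be generalized quaternion. In the LeanEval form this follows from the named $P$ by Sylow conjugacy, the same observation the paper uses.

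One real simplification is available: you do not need to pass to $\bar G=G/O(G)$. Apply Theorem~\ref{thm:zstar} to $G$ and $t$ directly, since the isolation argument works verbatim in $G$. It gives an odd-order normal subgroup $N$ with $[g,t]\in N$ for all $g$. By maximality of the odd core, $N\le O(G)$, so every $[g,t]$ lies in $O(G)$, which is exactly the centrality of $tO(G)$.

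This removes the bookkeeping about Sylow subgroups of the quotient, and the check that $\bar t\neq 1$, which you flagged as one of the two main formalization costs. What remains is the unique-involution lemma. It is immediate once you note that every element outside the cyclic subgroup of index~$2$ of a generalized quaternion group squares to the central involution, and so has order $4$.
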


Theorem~\ref{thm:brauer-suzuki} is not the benchmark statement verbatim.  What
LeanEval poses, at commit \texttt{ff6a35e}, is the following.

\begingroup\color{blue}
\noindent\begin{minipage}{\linewidth}
\begin{lstlisting}
theorem brauer_suzuki {G : Type*} [Group G] [Finite G]
    (n : ℕ) (hn : 3 ≤ n)
    (P : Sylow 2 G)
    (hquat : Nonempty ((P : Subgroup G) ≃* QuaternionGroup (2 ^ (n - 2))))
    (t : G) (ht_mem : t ∈ (P : Subgroup G)) (ht_ord : orderOf t = 2) :
    (QuotientGroup.mk t : G ⧸ oddCore G) ∈ Subgroup.center (G ⧸ oddCore G)
\end{lstlisting}
\end{minipage}
\endgroup

Theorem~\ref{thm:brauer-suzuki} is this statement with \texttt{ht\_mem}, the
requirement that $t$ lies in the named Sylow $2$-subgroup $P$, left out, i.e.\ our proof of this statement does not use the \texttt{ht\_mem} hypothesis.  The two are
equivalent, and in the direction that matters ours is the stronger: an involution lies in
some Sylow $2$-subgroup, all of them are conjugate, and the conclusion never mentions the
one chosen, so the hypothesis is recovered by replacing $P$ with a conjugate.
The statement is equivalent to
the more familiar one that $Z(G/O(G))$ has order exactly two.
The equivalence is straightforward.
The center of $G/O(G)$ has no non-trivial subgroup of odd order, since such a subgroup would be a normal subgroup of odd order in a group
with none.
Thus, the center is a $2$-group and it lies inside every Sylow $2$-subgroup; being central
it lies in that subgroup's own center, which has order two because the subgroup is
generalized quaternion.  The image of $t$ is then its non-trivial element.
Conversely, a center of order two visibly contains a central involution.

The last of the four theorems is stated in terms of the notion of a strongly embedded subgroup.
We set it out on its own because it carries the weight of everything said
after it.  The notion is built from Mathlib primitives alone --- subgroups, conjugation,
and whether an element is an involution --- so a reader can satisfy themselves that the
Lean code matches what the literature says without leaving Mathlib's vocabulary.  That is what
makes it useful here: it is the connective between a statement that is easy to vet as
correctly formalized and the harder one, in terms of matrix groups, that the
classification actually delivers.

\begin{definition}[Strongly embedded subgroup]
\label{def:strongly-embedded}
A subgroup $M$ of a finite group $X$ is \emph{strongly embedded} if $M$ is
proper, $M$ contains an involution, and $M \cap M^{g}$ contains no involution for every
$g \in X \setminus M$.
\end{definition}

In Lean this is the whole of it:

\noindent\begin{minipage}{\linewidth}
\begin{lstlisting}
def IsInvolution {G : Type*} [Group G] (x : G) : Prop := x ≠ 1 ∧ x ^ 2 = 1

def IsStronglyEmbedded {X : Type*} [Group X] (M : Subgroup X) : Prop :=
  M ≠ ⊤ ∧ (∃ x ∈ M, IsInvolution x) ∧
    ∀ g ∉ M, ∀ x ∈ M ⊓ M.map (MulAut.conj g).toMonoidHom, ¬ IsInvolution x
\end{lstlisting}
\end{minipage}

\begin{theorem}[Bender--Suzuki]
\label{thm:bender-suzuki}
For a finite simple group $X$ the following are equivalent:
\begin{enumerate}
    \item\label{item:strongly-embedded} $X$ has a strongly embedded subgroup;
    \item\label{item:three-families} $X$ is isomorphic to $\mathrm{PSL}_2(2^{n})$ with $n \geq 2$, to
      the Suzuki group $\mathrm{Sz}(2^{2n+1})$ with $n \geq 1$, or to
      $\mathrm{PSU}_3(2^{n})$ with $n \geq 2$.
\end{enumerate}
\end{theorem}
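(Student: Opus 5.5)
The two implications are of very different depth, so I treat them separately. For \eqref{item:three-families}$\Rightarrow$\eqref{item:strongly-embedded} I exhibit the subgroup in each family. In $X=\mathrm{PSL}_2(q)$, $\mathrm{Sz}(q)$ or $\mathrm{PSU}_3(q)$ with $q$ a power of $2$, the natural candidate is $M=N_X(S)$ for a Sylow $2$-subgroup $S$. This is the Borel subgroup, namely the stabilizer of a point of the natural doubly transitive permutation representation on $q+1$, $q^2+1$ or $q^3+1$ points.

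The key facts I would verify from explicit matrix models are the following.
\begin{itemize}
\item[(a)] The Sylow $2$-subgroups are TI sets: distinct ones intersect trivially.
\item[(b)] Every involution fixes exactly one point, and $M$ is the full point stabilizer.
\end{itemize}
Given (b), take $g\notin M$. Then $M\cap M^{g}$ is a two-point stabilizer, and by (b) it contains no involution. The subgroup $M$ is proper and contains $S$, hence contains an involution. This gives Definition~\ref{def:strongly-embedded} directly. In Lean this step is concrete but laborious: it requires building the three families, with the Suzuki groups via Higman's Suzuki $2$-groups~\cite{higman1963suzuki}, and establishing their simplicity and the fixed-point computation.

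For \eqref{item:strongly-embedded}$\Rightarrow$\eqref{item:three-families} I follow Bender's theorem~\cite{bender1971transitive} together with the Suzuki and Peterfalvi reduction~\cite{peterfalvi1986bendersuzuki}. Let $M$ be strongly embedded in the simple group $X$, and let $\Omega=X/M$. Then every involution fixes exactly one point of $\Omega$, and $M$ contains a full Sylow $2$-subgroup, so it has odd index. The steps are as follows.
\begin{enumerate}
\item I first show, using Theorem~\ref{thm:zstar} and a Thompson-transfer argument, that $X$ has a single class of involutions and that $M$ controls fusion of its involutions. This uses the standard characterization of strong embedding by the condition $N_X(T)\le M$ for every nontrivial $2$-subgroup $T\le M$.
\item If the $2$-rank is $1$, a Sylow $2$-subgroup is cyclic or generalized quaternion. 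Burnside transfer in the cyclic case, and Theorem~\ref{thm:brauer-suzuki} in the quaternion case, then contradict simplicity. Hence the $2$-rank is at least $2$.
\item With $2$-rank at least $2$, I invoke Bender's analysis. The goal is to show that $X$ acts doubly transitively on $\Omega$, with two-point stabilizer a complement $H$ to $O_2(M)$. This makes $X$ a Zassenhaus-type group in which every involution fixes exactly one point. Here Theorem~\ref{thm:odd-order} enters to guarantee that the odd-order subgroups arising, such as $O(M)$ and the point stabilizers acting on $\Omega$, are solvable.
\item Finally, I apply Suzuki's classification of such doubly transitive groups (Suzuki~\cite{suzuki1986group2}, Hering~\cite{hering1971finite}, and Higman's description of the Sylow $2$-subgroup), splitting into cases according to whether $S$ is abelian, a Suzuki $2$-group of type $\mathrm{Sz}$, or of unitary type. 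In each case I recognize $X$ by its Sylow normalizer and by counting.
\end{enumerate}

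The main obstacle is step 3, Bender's core argument that a strongly embedded subgroup forces the doubly transitive action. It relies on delicate local analysis of $O(M)$, on the Brauer--Suzuki-type signalizer considerations in~\cite{bender1971transitive}, and on character-theoretic counting in Peterfalvi's version.

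A secondary obstacle is the identification step 4. Matching abstract data to the explicit matrix groups means formalizing recognition theorems in the sense of the Dickson and Huppert material, including the corrected Satz~II.8.27. It also requires the isomorphism-type bridges of type \textsc{[Mal1]} between the abstract Suzuki $2$-group definitions and the concrete models.
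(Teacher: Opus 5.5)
Your proposal takes essentially the same route as the paper. The reverse implication uses the Borel subgroup, the stabilizer of an ovoid point, and the stabilizer of an isotropic point of the constructed families, and the classification direction draws on the same literature (Bender's paper, Peterfalvi's Bender--Suzuki material, Higman's classification of Suzuki $2$-groups, the corrected Dickson theorem) on top of the Odd Order, $Z^*$ and Brauer--Suzuki stages.

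Three small corrections: you need not prove the families simple, since simplicity of $X$ is a hypothesis; the paper builds $\mathrm{Sz}(2^{2n+1})$ from Suzuki's explicit generators in $\mathrm{GL}_4(2^{2n+1})$, not from Higman's $2$-groups; and step~1 needs neither $Z^*$ nor transfer, because two involutions fixing distinct points have product of odd order (otherwise a suitable power of the product is an involution centralizing both, hence fixing both points), so comparing with an involution outside $M$ (which exists as $X$ is simple) gives a single class, while control of fusion in $M$ follows at once from uniqueness of fixed points.
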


Two features of this statement are included primarily for the benefit of the reader. They illustrate what a formalization of a theorem of this scale can provide beyond the bare fact that Lean accepts its proof.

The first feature is that the statement is an equivalence, rather than only the classification implication from Side~\ref{item:strongly-embedded} to Side~\ref{item:three-families} of Theorem~\ref{thm:bender-suzuki}. The two sides differ considerably in how readily a human reader can inspect them. Side~\ref{item:three-families} names three families of matrix groups, two of them constructed here. Fully checking this side requires examining the underlying constructions, particularly that of \(\mathrm{Sz}(2^{2n+1})\), which is elaborate and has no existing counterpart in Mathlib. Side~\ref{item:strongly-embedded}, by contrast, is Definition~\ref{def:strongly-embedded}. It fits on three lines and can be compared directly with the standard definition in the group-theory literature.

The reverse implication from Side~\ref{item:three-families} to Side~\ref{item:strongly-embedded} therefore serves as a reader-facing sanity check on the formalization. It verifies that every group belonging to one of the three explicitly constructed families has a strongly embedded subgroup: respectively, a Borel subgroup, the stabilizer of a point of the ovoid, or the stabilizer of an isotropic point. This direction is not needed to state the classification, and proving it directly is itself highly nontrivial. Its formal proof consequently establishes a substantial and readily intelligible mathematical consequence of the constructions, even for readers who do not inspect every detail of their matrix representations. Together with the classification direction, it also shows that the three constructed families collectively coincide with the finite simple groups possessing a strongly embedded subgroup.

This sanity check supplements the identification of the constructions. Our \(\mathrm{PSL}_2(2^n)\) is\\
\texttt{ProjectiveSpecialLinearGroup (Fin 2) (GaloisField 2 n)}, namely Mathlib's projective special linear group over its field of order \(2^n\). Our \(\mathrm{PSU}_3(2^n)\) is defined, using Mathlib's matrix vocabulary, as the group of determinant-one isometries of the Hermitian form with antidiagonal Gram matrix, modulo scalars. Our construction of \(\mathrm{Sz}(2^{2n+1})\) is based directly on Suzuki's original construction in~\cite{suzuki1960new}. Following that construction, we define it as the subgroup of \(\mathrm{GL}_4(2^{2n+1})\) generated by three explicit families of matrices. The correspondence is therefore grounded in the published matrix construction itself. Because this construction is substantially more involved than those of the other two families, however, checking every detail requires considerable effort from a human reader. The equivalence provides an additional check by formally establishing that these explicitly constructed groups possess the highly nontrivial structural property required by the Bender--Suzuki theorem.

The second feature is that Side~\ref{item:strongly-embedded} contains essentially no project-specific group-theoretic machinery. It is stated using Mathlib's notions of subgroups, conjugation, and lattice operations, together with a one-line definition of an involution given four lines earlier in the same file. In the development, this side is represented by \texttt{IsStronglyEmbedded} in
\texttt{BenderSuzuki/FinalTheorem.lean}. This is the same predicate that appears as the hypothesis of \texttt{bender\_suzuki}, and the equivalence is formalized as
\texttt{isSimpleBenderGroup\_\allowbreak iff\_\allowbreak exists\_\allowbreak stronglyEmbedded}, in the file \texttt{Classification.lean}. Thus, the readily inspectable side of the equivalence is stated almost entirely in standard Mathlib vocabulary and can be compared directly with the conventional mathematical formulation.

Reading a statement is one check; the other is mechanical.  Lean accepts any file that compiles, and macros, custom tactics and new axioms give an author many tricks to compile something that does not prove what its name suggests.  The comparator~\cite{comparator} closes that gap: it checks that the theorem proved here has the same type as the statement written independently of this development, then replays the proof through the kernel, admitting no axioms beyond the three permitted ones.  Section~\ref{sec:analysis-rocq} describes how we use it.

\section{Prompts}

\begin{PromptBlock}{{\sc ProveAgent}}

\begin{PromptBlock}{Using Lean compiler}
The following instructions focus on how to use the Lean compiler to test Lean interfaces and write test snippets.
\end{PromptBlock}

Use a disposable Lean file as the scratch surface for an investigation. Prefer a uniquely named directory under \texttt{/tmp}, created with \texttt{mktemp -d}, so probing does not pollute the repository. If the user designates an existing probe file, inspect it before editing and preserve unrelated work.

\paragraph{Probe loop.}
\begin{enumerate}
\item Import the narrowest module that exposes the declarations under study.
\item Keep related checks and small examples together while the investigation is active.
\item Run the probe from the repository root:
\begin{lstlisting}
    lake env lean /tmp/<probe-directory>/Probe.lean
\end{lstlisting}
\item Read the complete diagnostic, revise one uncertainty, and rerun.
\item Put the final proof or definition in its owning module. Remove only temporary artifacts that you created.
\end{enumerate}

A successful probe establishes elaboration in the probe environment; it does not replace building the edited source module.

\paragraph{Inspect declarations.}

Use exact qualified names when possible:

\noindent\begin{minipage}{\linewidth}
\begin{lstlisting}
import BenderSuzuki.PFchapter4section1.lemma_a

#check Subgroup.map_normalClosure
#check @Subgroup.map_normalClosure
#print Subgroup.map_normalClosure
#print axioms BenderSuzuki.PFchapter4section1.lemma_a

set_option pp.all true in
#check Subgroup.map_normalClosure
\end{lstlisting}
\end{minipage}

\begin{itemize}
    \item Use \texttt{\#check} for the elaborated type.
    \item Add \texttt{@} to expose implicit arguments.
    \item Use \texttt{\#print} for the declaration body or theorem statement available through the environment.
    \item Use \texttt{\#print axioms} only after rebuilding edited source. Treat \texttt{sorryAx} as unresolved proof debt.
    \item Scope \texttt{set\_option pp.all true} with \texttt{in} to expose coercions, universes, reducible wrappers, and synthesized arguments without flooding the entire probe.
\end{itemize}

\paragraph{Inspect proof states.}

Write a small \texttt{example} with the same imports, variables, hypotheses, and target:

\noindent\begin{minipage}{\linewidth}
\begin{lstlisting}
example {G : Type*} [Group G] (H : Subgroup G) : H ≤ H := by
  intro x hx
  trace_state
  exact hx
\end{lstlisting}
\end{minipage}

Use \texttt{trace\_state} at the point of interest. Deliberately leaving a disposable example incomplete is also useful because the CLI prints the remaining context and goal. A temporary \texttt{sorry} is permitted only in the disposable probe; never count it as progress or copy it into finished source.

Test candidate terms in separate labeled examples or one at a time. Preserve the exact type mismatch from a failed candidate instead of retrying it without a new reason.

\paragraph{Diagnose source and freshness.}

\begin{itemize}
    \item Run \texttt{lake env lean path/to/Target.lean} for direct source diagnostics.
    \item Convert a source path to its build target by dropping \texttt{.lean} and replacing \texttt{/} with \texttt{.}, then run \texttt{lake build <Target.Module>}.
    \item Remember that importing `Target` loads its last built \texttt{.olean}; it does not expose unbuilt edits to \texttt{Target.lean}.
    \item After changing source, rebuild the owning module before checking an exported declaration or its axioms through a probe.
    \item Capture long output under \texttt{/tmp} and scan it with
    \begin{lstlisting}
        rg "error:|warning:|sorry|unsolved goals"\end{lstlisting}
    while retaining the full log for context.
\end{itemize}

When an error appears only in a downstream module, compare the exported declaration's visibility, import boundary, and exact elaborated type before changing the proof itself.

\begin{PromptBlock}{Finding a proof}
The following instructions focus on how to find a proof of a theorem directly.
\end{PromptBlock}

\paragraph{Establish the proof contract.}
\begin{enumerate}
    \item Restate the target in plain mathematical language: inputs, typeclass assumptions, hypotheses, and conclusion.
    \item Inspect the surrounding namespace, variables, imports, visibility, and neighboring proof style.
    \item Do not silently weaken, strengthen, or replace the claim.
    \item Build the owning module before editing when a clean baseline is important.
\end{enumerate}

\paragraph{Loop for a proof.}
\begin{enumerate}
    \item Normalize only enough to expose the mathematical shape with \texttt{dsimp}, \texttt{simp}, or carefully chosen rewrites.
    \item Search the local repository, then Mathlib, before proving reusable infrastructure from scratch.
    \item  Confirm promising declarations with \texttt{\#check} or a small probe using the target imports.
    \item Decompose the proof into the smallest useful helpers. Keep theorem-local facts local or \texttt{private}; promote a helper to public API only when a real downstream consumer justifies it.
    \item Make a small source edit and run \texttt{lake build <Target.Module>}.
    \item Read the complete first error, inspect the local goal in a probe when needed, and fix one cause at a time.
    \item Once the owner builds, run the narrowest relevant downstream build. Use a full \texttt{lake build} for final integration when the change crosses libraries or build configuration.
\end{enumerate}

Prefer disposable \texttt{example} declarations over inserting exploratory placeholders into production source. If a top-down skeleton in source is genuinely useful, keep every \texttt{sorry} short-lived and explicitly tracked, and remove all new placeholders before completion.

\paragraph{Proof construction priorities.}
Use tactics in roughly this order, choosing the clearest fit rather than forcing a fixed script:
\begin{itemize}
    \item Simplification: \texttt{simp}, \texttt{simp?}, \texttt{dsimp}, \texttt{simp only}, or \texttt{simpa}.
    \item Structure: \texttt{intro}, \texttt{refine}, \texttt{constructor}, \texttt{use}, \texttt{rcases}, \texttt{obtain}, and \texttt{cases}.
    \item Rewriting and transport: \texttt{rw}, \texttt{nth\_rw}, \texttt{simp\_rw}, \texttt{change}, \texttt{show}, \texttt{convert}, and \texttt{calc}.
    \item Domain automation: \texttt{ring}, \texttt{nlinarith}, \texttt{linarith}, \texttt{omega}, or other imported tactics when the goal matches their domain.
    \item Search automation: \texttt{exact?}, \texttt{apply?}, \texttt{aesop?}, and \texttt{grind} for exploration; retain them when stable and readable, otherwise use their suggestions to write explicit steps.
\end{itemize}

Prefer a short \texttt{calc} or named intermediate fact when it exposes the mathematical reason for a transformation. Avoid large undirected automation when a few deterministic steps make dependencies and failure modes clearer.

\paragraph{Decompose hard proofs.}

Before a new difficult route, write a 3--8 line sketch naming:

\begin{itemize}
    \item the reduction from the main goal;
    \item the helper declarations or local facts;
    \item the key project or Mathlib dependencies;
    \item the highest-risk inference.
\end{itemize}

Discharge easy normality, containment, equality, and impossible branches early. Isolate the hard core as an explicit lemma with the weakest useful statement. Make the main theorem typecheck against helper statements when that clarifies the architecture, then prove helpers from easiest to hardest.

If an informal source makes an unsupported jump, isolate the exact implication and try to falsify it under weaker assumptions. Repair the hypothesis or route, prove a narrow transfer lemma, or record a blocker; do not encode the missing inference as an axiom, opaque declaration, or unexplained \texttt{sorry}.

\paragraph{Debug common failures.}

\begin{itemize}
    \item If simplification stalls, use \texttt{simp?}, inspect the suggested lemmas, then prefer a focused \texttt{simp only [...]} when stability matters.
    \item If unification fails, inspect \texttt{\#check @name}, make key arguments explicit, and check coercions or orientation.
    \item If typeclass search fails or loops, identify the missing instance. Add it as a hypothesis or local \texttt{haveI} when mathematically justified; avoid an unnecessary global instance.
    \item If the goal looks unrelated, unfold only the relevant definitions with \texttt{dsimp}, \texttt{change}, or \texttt{simp [definition]} and confirm that the statement has not drifted.
    \item If a definition in a \texttt{module} file cannot unfold, add \texttt{@[expose]} only when unfolding is part of its intended interface; otherwise prove and use an exposed helper theorem.
    \item If a downstream file cannot see a declaration, check \texttt{public} visibility and whether the dependency requires \texttt{public import}.
    \item If a proof works only with a broader import, locate the actual dependency and decide whether the target should import it or the helper belongs in a lower module.
\end{itemize}

\paragraph{Verify completion.}

\begin{enumerate}
    \item Build the owning module and relevant downstream target.
    \item Inspect logs for \texttt{error:}, \texttt{warning:}, \texttt{sorry}, and \texttt{unsolved goals}; explain any pre-existing warning that remains relevant.
    \item Scan changed Lean source for unintended \texttt{sorry}, \texttt{axiom}, and \texttt{opaque} declarations.
    \item Rebuild before using \texttt{\#print axioms Fully.Qualified.name} on exported theorems.
    \item Confirm intended names, namespace, visibility, imports, and statement meaning.
    \item Run \texttt{git diff --check} and preserve unrelated working-tree changes.
\end{enumerate}

Do not declare the task complete merely because a scratch example elaborates or one local theorem closes; satisfy the user's full requested scope and integration boundary.

\begin{PromptBlock}{Long-horizon management}
The following instructions focus on how to manage the proof task using a DAG.
\end{PromptBlock}

Since proving the targe theorem is likely to span multiple context window and sessions, maintain files for tracking states.
Update live task state incrementally during the task.

\paragraph{Proof state management.}
In \texttt{state.md}, record the following information:
\begin{itemize}
    \item \emph{active target}, the theorem being proved;
    \item \emph{progress}, the actions taken by the agent;
    \item \emph{lessons}, reusable experience for next steps;
    \item \emph{validation}, the acceptance gate of the target.
\end{itemize}
Keep the state file up to date.

\paragraph{Proof graph management.}
A node is a theorem, or definition whose statement is formalized or introduced as a helper.
A route is a tentative proof decomposition that may be refined, or pruned.
Node states are only \texttt{proposed}, \texttt{proved}, and \texttt{pruned}.
If no usable route exists, inspect the Lean statement, source theorem/proof, and dependencies, then write a short Lean-oriented proof sketch and helper plan.
If the route is blocked, choose one: split the current subnode, replace a helper, or prune the route and create a new one.

\end{PromptBlock}

\begin{PromptBlock}{{\sc RefineAgent}}
The task is to refine the node graph and the active route.
A node is a theorem, or definition whose statement is formalized or introduced as a helper.
A route is a tentative proof decomposition that may be refined, or pruned.
Node states are only \texttt{proposed}, \texttt{proved}, and \texttt{pruned}.

The current active target \{\{lean target\}\} is too hard to be proved directly.
Please decompose it into several helper lemmas and build a new route.
\begin{itemize}
    \item Isolate the exact inference, list the available hypotheses, and try to falsify the bare implication in a smaller setting.
    \item Before a new route, write a proof sketch: goal, helpers, dependencies, and risk.
    \item Prune easy branches early. Discharge normal, containment, and impossible branches first, then isolate the hard branch.
\end{itemize}

\end{PromptBlock}

\begin{PromptBlock}{{\sc TranslatorAgent}}
Your task is to translate all definitions and theorem statements in \{\{\texttt{source item}\}\} in \{\{\texttt{Lean file}\}\}.

\paragraph{Requirements.}
\begin{itemize}
    \item You only need to formalize the statements of definitions and theorems etc., and fill the proof bodies with sorry.
    \item Every definition and theorem must be exactly the same as the natural language version in the source.
    In particular, you should make sure the Lean version doesn't omit assumptions in the source.
\end{itemize}
\end{PromptBlock}

\begin{PromptBlock}{{\sc ReviewAgent} for {\sc TranslatorAgent}}
The task is to review \{\{lean file\}\} for whether it is a source faithful translation of \{\{source item\}\}.

The original task instruction is

\noindent\begin{minipage}{\linewidth}
\begin{lstlisting}
```
{{TranslatorAgent instructions}}
```
\end{lstlisting}

\paragraph{Common pitfalls.}
The following is a list of common mistakes one usually makes when translating definitions and statements to Lean.
Pay attention to the list:
\{\{learned checklist\}\}

\end{minipage}
\end{PromptBlock}

\begin{PromptBlock}{{\sc ReconcilerAgent}}
Currently we need to invoke \{\{lean target\}\} in \{\{lean source\}\} as describe by the natural-language \{\{source items\}\}.
Please investigate the situation and
\begin{itemize}
    \item fix without changing downstream definitions or theorem statements;
    \item request human investigation.
\end{itemize}

\paragraph{Rules for fixing.}
When attempting to fix the given lean code, follow the rules:
\begin{itemize}
    \item You may change the given lean code, definitions, theorem statements, or proofs.
    \item The final Lean repository must build.
    In particular, if you changed a given lean declaration, all downstream usage of the declaration must be updated.
    \item You're not allowed to modify any definitions or statements except the two given to you.
    \item If you fails to fix the lean code, report to human immediately.
\end{itemize}

\paragraph{Common patterns.}
\begin{itemize}
    \item If the theorem statement asserts existence, try to make the statement constructive, i.e.\ write down the concrete construction in the statement.
    Consider this rule in particular when downstream needs uniqueness of the existence.
    \item The universe level parameters should be as general as possible.
    \item If the desired Lean theorem cannot be invoked due to mismatched type, first write down a separate bridge lemma deriving the desired property from the existing assumptions.
\end{itemize}

\end{PromptBlock}

\end{document}